\newtheorem{Th}{Theorem}
\newtheorem{Def}{Definition}
\newtheorem{Cor}{Corollary}
\begin{document}
\thispagestyle{empty}

{\small
\title[Autoresonance in systems with combined excitation and weak dissipation]
{Autoresonance in oscillating systems with combined excitation and weak dissipation}

\author{Oskar A. Sultanov}
\address{Institute of Mathematics, Ufa Federal Research Center, Russian Academy of Sciences,  112, Chernyshevsky str., Ufa 450008 Russia.}
\email{oasultanov@gmail.com}

\maketitle {\small
\begin{quote}
\noindent{\bf Abstract.}
A mathematical model describing the initial stage of the capture into autoresonance for nonlinear oscillating systems with combined parametric and external excitation is considered. The solutions with unboundedly growing amplitude and limited phase mismatch correspond to the autoresonant capture. The paper investigates the existence, stability and bifurcations of such solutions in the presence of a weak dissipation in the system. Our technique is based on the study of particular solutions with power-law asymptotics at infinity and the construction of suitable Lyapunov functions.
\medskip

\noindent{\bf Keywords: }{nonlinear oscillations, autoresonance, stability,  Lyapunov function }

\medskip
\noindent{\bf Mathematics Subject Classification: }{34C15, 34D05, 37B25, 37B55, 93D20}
\end{quote}
}

\section*{Introduction}
Autoresonance is a phenomenon that occurs in nonlinear systems with slowly varying oscillating perturbations. Under certain conditions, the system automatically adjusts to the disturbances and holds this state for a sufficiently long period of time. As a result, the energy of the system can increase significantly~\cite{LFS09}. The autoresonance was first studied in the problems associated with the acceleration of particles~\cite{V44,M45} and planetary dynamics~\cite{G73,S78}. Nowadays, it is considered as a universal phenomenon with a wide range of applications~\cite{FGF00,USM10,Aetal11,KHM14,LF19etal,AGLF19,BShF18,FSh20}. The study of the corresponding mathematical models leads to new and challenging problems in the field of nonlinear dynamics~\cite{LFJPA08,LKRMS08,NVA13}.

Mathematical models associated with the autoresonance have been studied in many papers. See, for instance,~\cite{FGF01,LK03,LKMS09,GKT10,AK20}, where the systems with external driving were analyzed, and~\cite{KMPRE01,AM05,OS16,OK18,OS19}, where the models of parametric autoresonance were investigated. The effect of a combined external and parametric excitation on the autoresonant capture in nonlinear systems was first studied in~\cite{OS18,OS20}. In this paper, the autoresonance model with the combined excitation in the presence of a weak dissipation is considered, and the existence and stability of different autoresonant modes are discussed.

The paper is organized as follows. In section 1, the mathematical formulation of the problem is given. In section 2, the particular autoresonant solutions are described and the partition of a parameter space is constructed. The stability of particular solutions and asymptotics for general autoresonant solutions are discussed in section 3. A discussion of the results obtained is contained in section 4.

\section{Problem statement}
Consider the non-autonomous system of two differential equations:
\begin{gather}
    \label{MS}
    \begin{split}
        \frac{d\rho}{d\tau}+\gamma(\tau) \rho=\alpha(\tau)\sin\psi -\beta(\tau) \rho \sin (2\psi+\nu), \\
        \rho\Big[\frac{d\psi}{d\tau}-\rho^2+ \lambda \tau\Big]= \alpha(\tau)\cos\psi -\beta(\tau) \rho \cos (2\psi+\nu),
    \end{split}
\end{gather}
with the parameters $\lambda\neq 0$ and $\nu\in [0,\pi)$. Smooth given functions $\alpha(\tau)\not\equiv 0$ and $\beta(\tau)$ correspond to the amplitude of an external and a parametric driving, a positive function $\gamma(\tau)$ is associated with a dissipation. This system arises in the study of the autoresonance phenomena in a class of nonlinear oscillatory systems with a combined chirped-frequency excitation and a weak dissipation. The functions $\rho(\tau)$, $\psi(\tau)$ describe the evolution of the amplitude and the phase mismatch of the oscillators. The solutions with $\rho(\tau)\sim\sqrt{\lambda\tau}$ and $\psi(\tau)\sim\sigma$, $\sigma={\hbox{\rm const}}$ as $\tau\to \infty$ are associated with the phase-locking phenomenon and the capture into autoresonance. Note that system \eqref{MS} also has non-autoresonant solutions with a bounded amplitude, but such solutions are not considered in the present paper.

The combined effect of parametric and external excitations is determined by the behaviour of the ratio $f(\tau)\equiv \beta(\tau)/\alpha(\tau)$ as $\tau\to\infty$.  Indeed, if $f(\tau)\sim f_0 \tau^{-1/2-\varkappa}$, $\varkappa>0$, $f_0={\hbox{\rm const}}\neq 0$, the parametric pumping is insignificant and system \eqref{MS} corresponds to a perturbation of the model with the external driving. If $f(\tau)\sim f_0 \tau^{-1/2+\varkappa}$, the impact of external driving becomes inconsiderable and the system takes the form of a perturbed model of parametric autoresonance. The parametric and external excitations are comparable when $f(\tau)\sim f_0 \tau^{-1/2}$.
Note also that the existence of autoresonant solutions in systems with a dissipation depends on the behaviour of the function $g(\tau)\equiv\gamma(\tau)/\alpha(\tau)$ as $\tau\to\infty$. From the first equation in \eqref{MS} it follows that the necessary condition is $g(\tau)\sim g_0 \tau^{-1/2-\varkappa}$ with $\varkappa\geq 0$, $g_0={\hbox{\rm const}}\neq 0$.  Thus, in this paper it is assumed that
\begin{gather*}
    \alpha(\tau)=\tau^{\frac 12} \sum_{k=0}^\infty \alpha_k \tau^{-k}, \quad
    \beta(\tau)=\sum_{k=0}^\infty \beta_k \tau^{-k},\quad
    \gamma(\tau)=\sum_{k=0}^\infty \gamma_k \tau^{-k},\quad
        \tau\to\infty, \quad \alpha_k,\beta_k,\gamma_k={\hbox{\rm const}}.
\end{gather*}
Without loss of generality, we assume that $\alpha_0=1$ and $\gamma_0\neq 0$.

Note that system \eqref{MS} appears after averaging of perturbed oscillatory nonlinear systems and describes a long term evolution of solutions. For a system with one degree of freedom, the example is given by the following equation:
\begin{gather}
    \label{ex}
        \frac{d^2x}{dt^2}+\epsilon C(\epsilon t) \frac{dx}{dt}+\Big(1+\epsilon B(\epsilon t)\cos\big(2\zeta(t)-\nu\big) \Big) U'(x)=\epsilon A(\epsilon t)\cos\zeta(t),
\end{gather}
where $\zeta(t)=t-\vartheta t^2$, $U(x)=x^2/2-\epsilon x^4/24+\mathcal O(\epsilon^2)$, $0<\epsilon,\vartheta\ll 1$. We see that equation \eqref{ex} with $\epsilon=0$ has a stable trivial solution $x(t)\equiv 0$, $\dot x(t)\equiv 0$. Solutions of the perturbed equation with small enough initial data $(x(0),x'(0))$, whose the energy $E(t)\equiv U(x(t))+(x'(t))^2/2$ increases significantly  with time and the phase $\Phi(t)$ is synchronised with the pumping such that $\Phi(t)-\zeta(t)=\mathcal O(1)$, correspond to the capture into autoresonance. The approximation of such solutions is constructed by using the method of two scales with slow and fast variables: $\tau=\epsilon t/4$ and $\zeta=\zeta(t)$. The substitution
\begin{gather*}
  x(t)= 2 \rho(\tau) \cos \big(\zeta+\psi(\tau)\big) +\mathcal O(\epsilon)
\end{gather*}
into equation \eqref{ex} and the averaging over the fast variable lead to system \eqref{MS} for the slowly varying functions $\rho(\tau)$ and $\psi(\tau)$ with  $\lambda=16 \vartheta  \epsilon^{-2}$, $\alpha(\tau)=A(\epsilon t)$, $\beta(\tau)=B(\epsilon t)$, $\gamma(\tau)=2 C(\epsilon t)$. Likewise, system \eqref{MS} is derived in many other nonlinear problems related to autoresonance, including infinite-dimensional systems (see~\cite{LKRMS08}.

In this paper, the conditions for the existence and stability of autoresonant solutions to system \eqref{MS} are discussed. Our technique is based on the analysis of particular solutions with power-law asymptotics at infinity. In the first step, such solutions are constructed and the conditions for their existence specify the partition of the parameter space. Then, the Lyapunov stability of the particular solutions is investigated. Since the considered system is non-autonomous, the use of linear stability analysis is limited and nonlinear terms of equations must be taken into account. In this case, the stability can be justified with the Lyapunov function method. The presence of stability will ensure the existence of a family of autoresonant solutions. For such solutions, the asymptotic estimates at infinity are obtained at the last step from the properties of the constructed Lyapunov functions.

\section{Particular autoresonant solutions}
Consider the particular autoresonant solutions having the following asymptotics:
\begin{gather}
    \label{PAS}
        \rho_\ast(\tau)=\rho_{-1}\sqrt \tau+\rho_0+\sum_{k=1}^{\infty} \rho_k \tau^{-\frac k2}, \quad \psi_\ast(\tau)=\psi_0+\sum_{k=1}^\infty\psi_k\tau^{-\frac k2}, \quad \tau\to\infty.
\end{gather}
Substituting these series into system \eqref{MS} and grouping the terms of the same power of $\tau$ yield $\rho_{-1}= \sqrt\lambda$, $\rho_0=0$, and $\psi_0=\sigma$, where $\sigma$ satisfies the equation
\begin{gather}
    \label{teq}
    \mathcal P(\sigma;\delta,\nu,\kappa)\equiv \delta\sin (2\sigma+\nu)-\sin\sigma+\kappa=0,  \quad \delta=\beta_0 \sqrt \lambda, \quad \kappa=\gamma_0\sqrt \lambda.
\end{gather}
Note that the number of roots to equation  depends on the values of the parameters $(\delta,\nu,\kappa)$.
If, in addition, the inequality $\mathcal P'(\sigma;\delta,\nu,\kappa)\neq 0$ holds, the remaining coefficients $\rho_k$, $\psi_k$ as $k\geq 1$ are determined from the chain of linear equations:
\begin{gather*}
        \begin{split}
            2\sqrt \lambda \rho_k & =\mathcal A_k(\rho_{-1},\dots,\rho_{k-1}, \sigma,\psi_1,\dots,\psi_{k-1}),  \\
            \mathcal P'(\sigma;\delta,\nu,\kappa)\psi_k & = \mathcal B_k(\rho_{-1},\dots,\rho_{k-1},\sigma, \psi_1,\dots,\psi_{k-1}),
    \end{split}
\end{gather*}
where
\begin{eqnarray*}
    \mathcal A_1&=&\frac{1}{\sqrt \lambda}(\delta \cos (2\sigma+\nu)-\cos\sigma),\\
    \mathcal A_2&=&-\frac{\psi_1}{\sqrt \lambda}(2\delta \sin (2\sigma+\nu)-\sin\sigma),\\
    \mathcal A_3 &=& -\rho_1^2 +\beta_1 \cos(2\sigma+\nu)-\Big(\alpha_1-\frac{\rho_1}{\sqrt\lambda}\Big)\frac{\cos\sigma}{\sqrt\lambda} \\&&-\frac{\psi_2}{\sqrt \lambda} (2 \delta \sin(2\sigma+\nu)-\sin\sigma)- \frac{\psi_1^2}{2\sqrt \lambda} (4\delta \cos(2\sigma+\nu)-\cos\sigma), \\
     \mathcal B_1&=&0,\\
     \mathcal B_2 &=&-\mathcal P''(\sigma;\delta,\nu,\kappa)\frac{\psi_1^2}{2}+\Big(\alpha_1-\frac{\rho_1}{\sqrt\lambda}\Big) \sin\sigma-\beta_1\sqrt\lambda\sin(2\sigma+\nu) -(1 + 2\gamma_1 )\frac{\rho_{-1}}{2}, \\
    \mathcal B_3&=& -\psi_1\psi_2 \mathcal P''(\sigma;\delta,\nu,\kappa) - \frac{\rho_2}{\sqrt\lambda}\sin\sigma - \frac{\psi_1^3}{6} \mathcal P'''(\sigma;\delta,\nu,\kappa)+ \alpha_1\psi_1\cos\sigma\\
        &&-2\psi_1 (\beta_1\sqrt\lambda+\beta_0\rho_1)\cos(2\sigma+\nu),
\end{eqnarray*}
etc. In particular,
\begin{gather*}
\psi_1=0,\quad  \psi_2=\theta, \quad \theta:=\frac{\mathcal B_2(\sigma)}{\mathcal P'(\sigma;\delta,\nu,\kappa)}.
\end{gather*}

Note that the pair of equations $\mathcal P(\sigma;\delta,\nu,\kappa)=0$ and $\mathcal P'(\sigma;\delta,\nu,\kappa)=0$ defines a bifurcation surface $S=S_1\cup S_2$ in the parameter space $(\delta,\nu,\kappa)$, where
\begin{gather*}
     S_j:=\{(\delta,\nu,\kappa)\in\mathbb R\times [0,\pi)\times \mathbb R: \sin \nu=p_j(\delta,\kappa)\},\\  p_j(\delta,\kappa):=\delta^{-1}\Big(\kappa (2\sin^2\varsigma_j-1)-\sin^3\varsigma_j\Big), \quad
  \sin\varsigma_j=z_j(\delta,\kappa), \\
  z_j(\delta,\kappa):=\frac{1}{3}\Big(4\kappa+(-1)^j\sqrt{4\kappa^2+12\delta^2-3}\Big), \quad j\in\{1,2\}.
\end{gather*}
For every $\kappa>0$, the bifurcation set is determined by the properties of $p_1(\delta,\kappa)$ and $p_2(\delta,\kappa)$ (see Fig.~\ref{fig1}).
\begin{figure}
\centering
\subfigure[$\kappa=0.4$]{\includegraphics[width=0.3\linewidth]{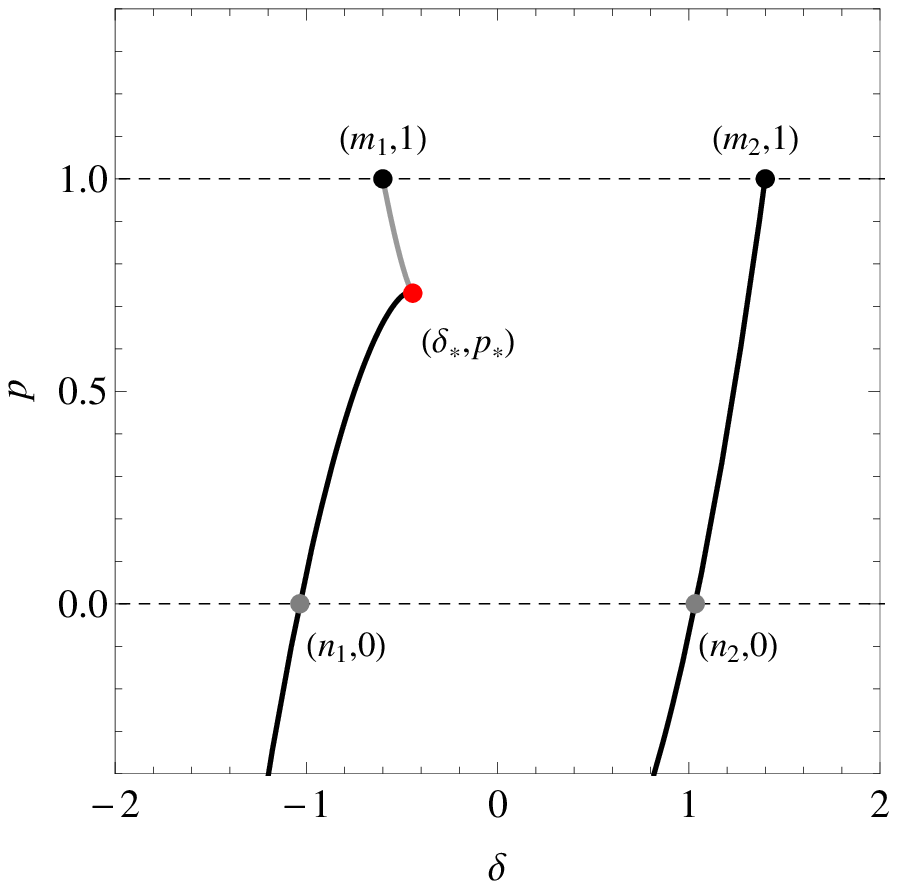}}
\hspace{2ex}
\subfigure[$\kappa=0.9$]{\includegraphics[width=0.3\linewidth]{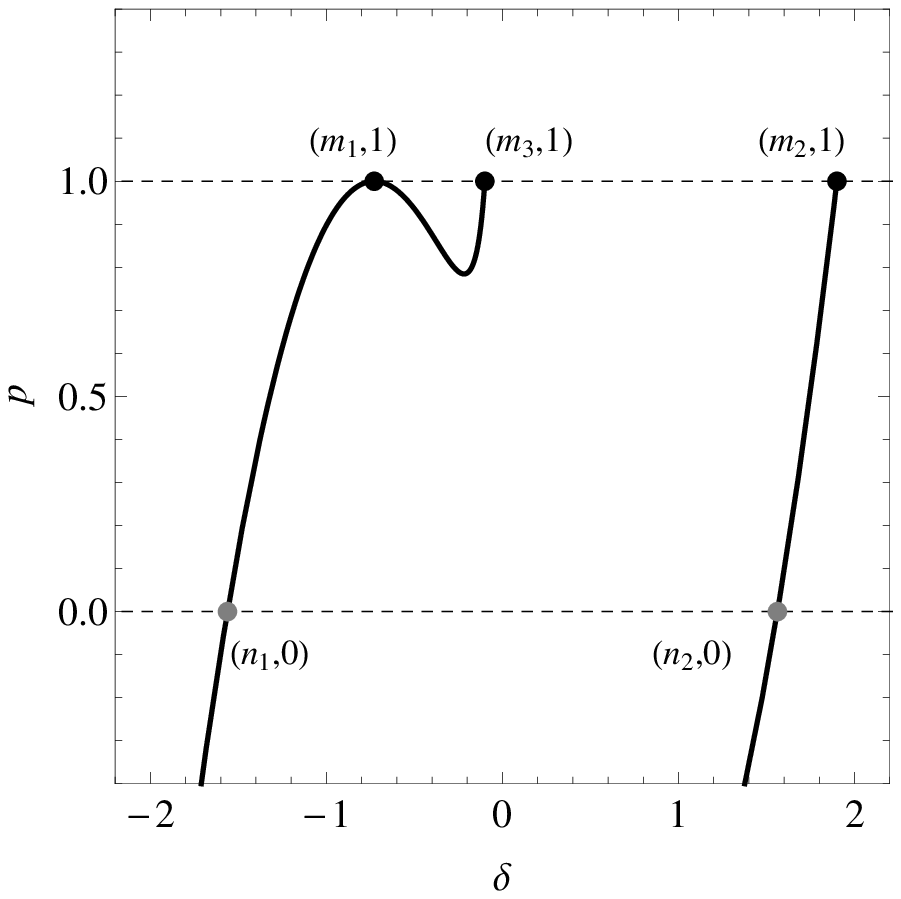}}
\\
\centering
\subfigure[$\kappa=1$]{\includegraphics[width=0.3\linewidth]{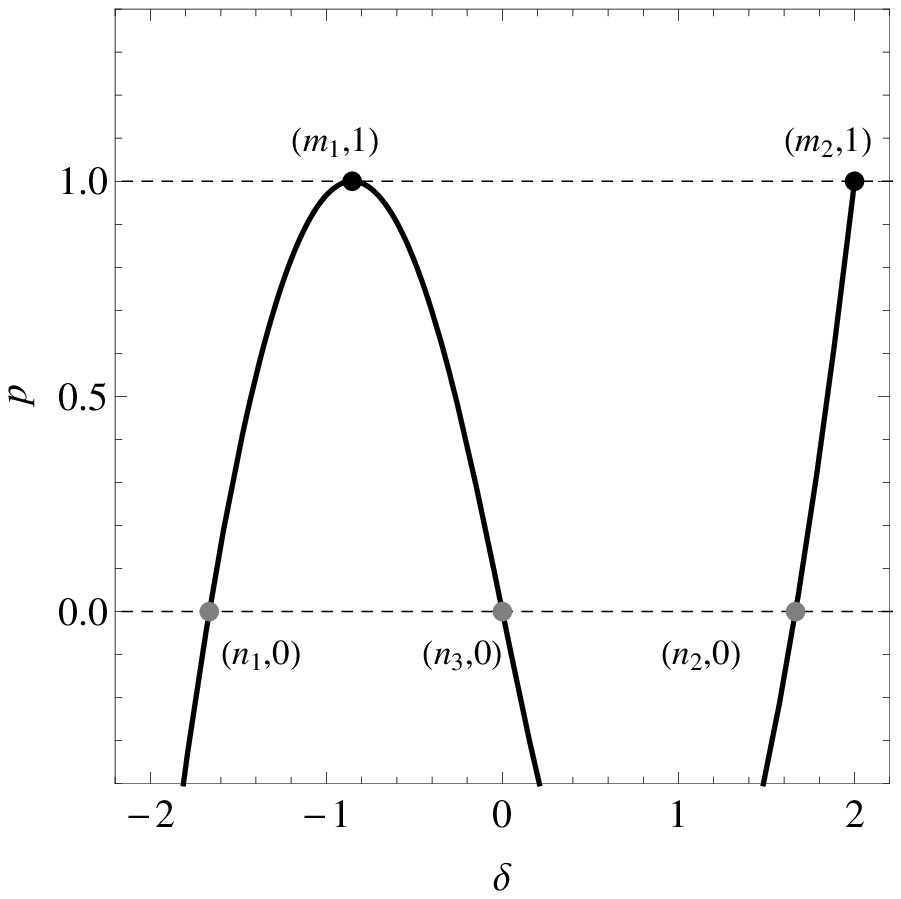}}
\hspace{2ex}
\subfigure[$\kappa=1.6$]{\includegraphics[width=0.3\linewidth]{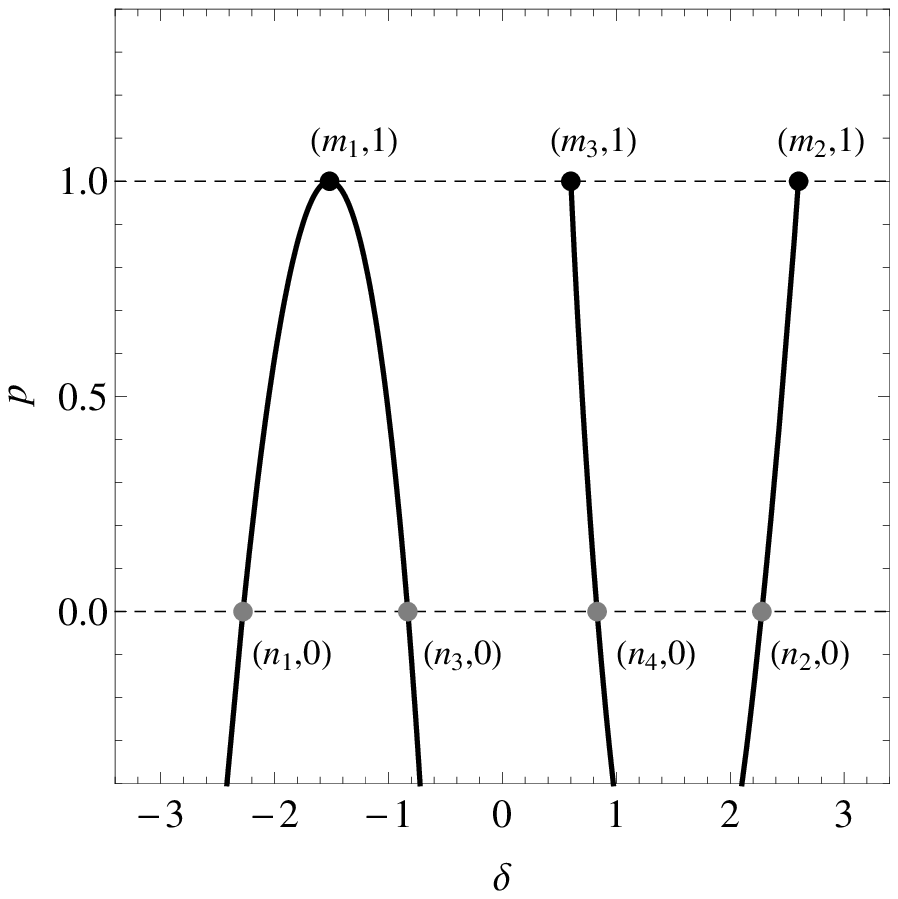}}
\caption{\small Graphs of $p_1(\delta,\kappa)$ (black curves) and $p_2(\delta,\kappa)$ (gray curves) as functions of the parameter $\delta$ with fixed $\kappa$. } \label{fig1}
\end{figure}
In particular, if $0<\kappa<3/4$, there are two curves
\begin{eqnarray*}
    s_+&:=&\{(\delta,\nu)\in [n_2,m_2]\times [0,\pi): \sin\nu=p_1(\delta,\kappa)\},\\
  s_- &:=&\{(\delta,\nu)\in [n_1,\delta_\ast]\times [0,\pi): \sin\nu=p_1(\delta,\kappa)\}\cup \{(\delta,\nu)\in [m_1,\delta_\ast]\times [0,\pi):\sin\nu=p_2(\delta,\kappa)\},
  \end{eqnarray*}
dividing the parameter plane $(\delta,\nu)$ into tree parts (see Fig.~\ref{fig2},a):
\begin{eqnarray*}
    \Omega_{+}&:{=}&\{(\delta,\nu)\in\mathbb R\times[0,\pi): \delta>s_+\},
     \quad
   \Omega_{-}:{=}\{(\delta,\nu)\in\mathbb R\times[0,\pi): \delta<s_-\},\\
    \Omega_{0}&:{=}&\{(\delta,\nu)\in\mathbb R\times[0,\pi): s_-< \delta<s_+\},
\end{eqnarray*}
where
$\delta_\ast=-\sqrt{(3-4\kappa^2)/12}$, $n_1<n_2$ are the roots of the equation $p_1(n,\kappa)=0$, $m_1=\kappa-1$, $m_2=\kappa+1$.
\begin{figure}
\centering
\subfigure[$\kappa=0.4$]{\includegraphics[width=0.3\linewidth]{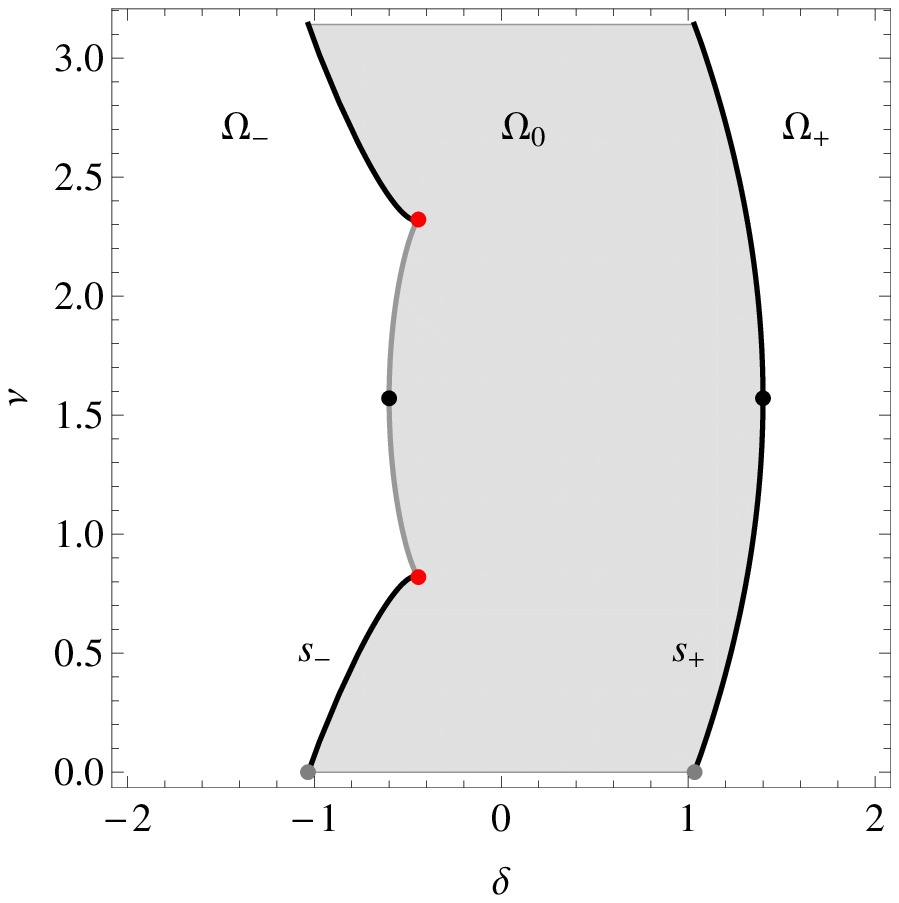}}
\hspace{2ex}
\subfigure[$\kappa=0.9$]{\includegraphics[width=0.3\linewidth]{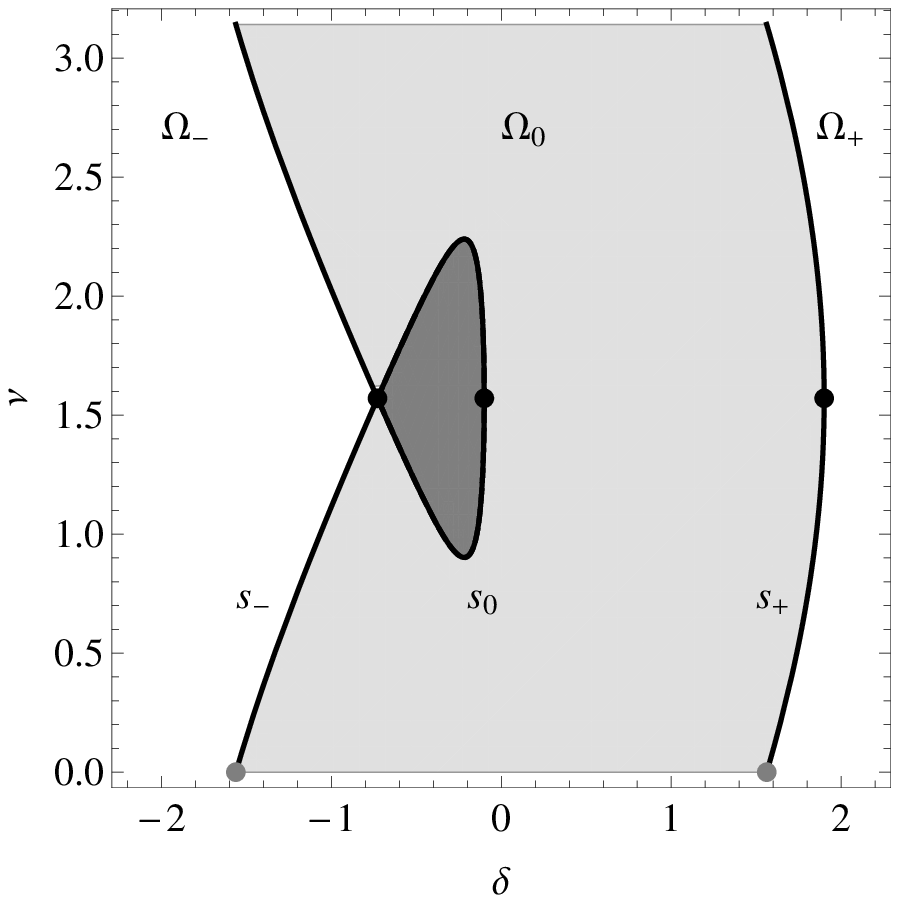}}
\\
\centering
\subfigure[$\kappa=1$]{\includegraphics[width=0.3\linewidth]{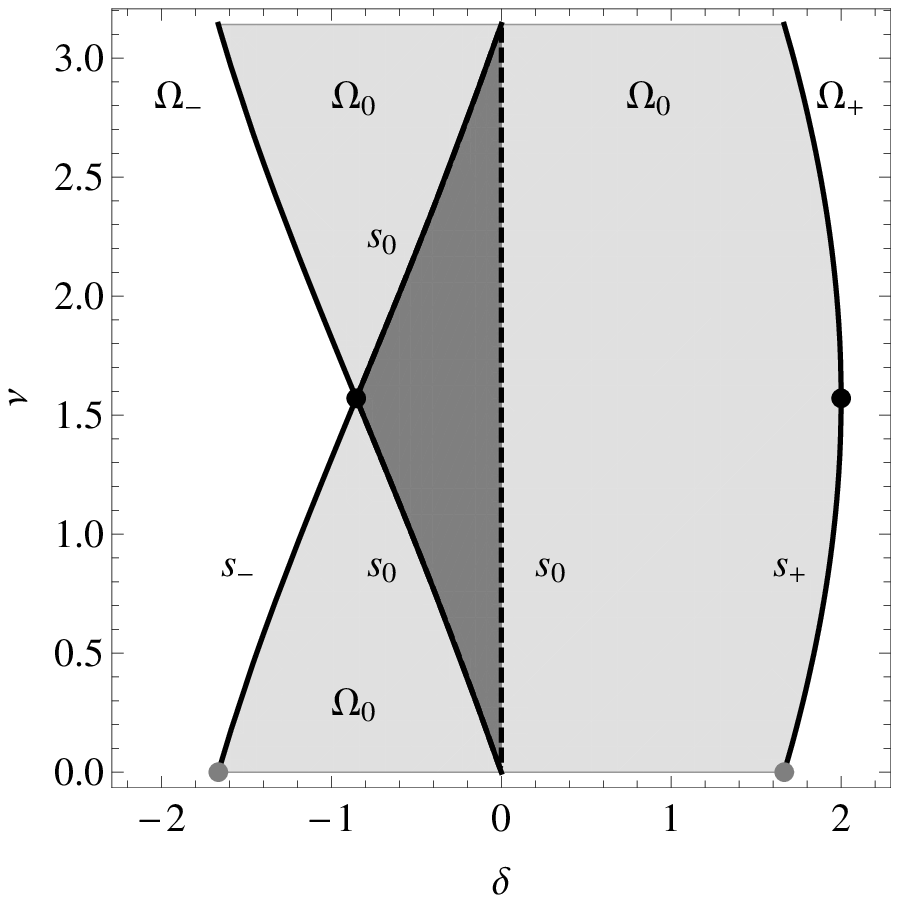}}
\hspace{2ex}
\subfigure[$\kappa=1.6$]{\includegraphics[width=0.3\linewidth]{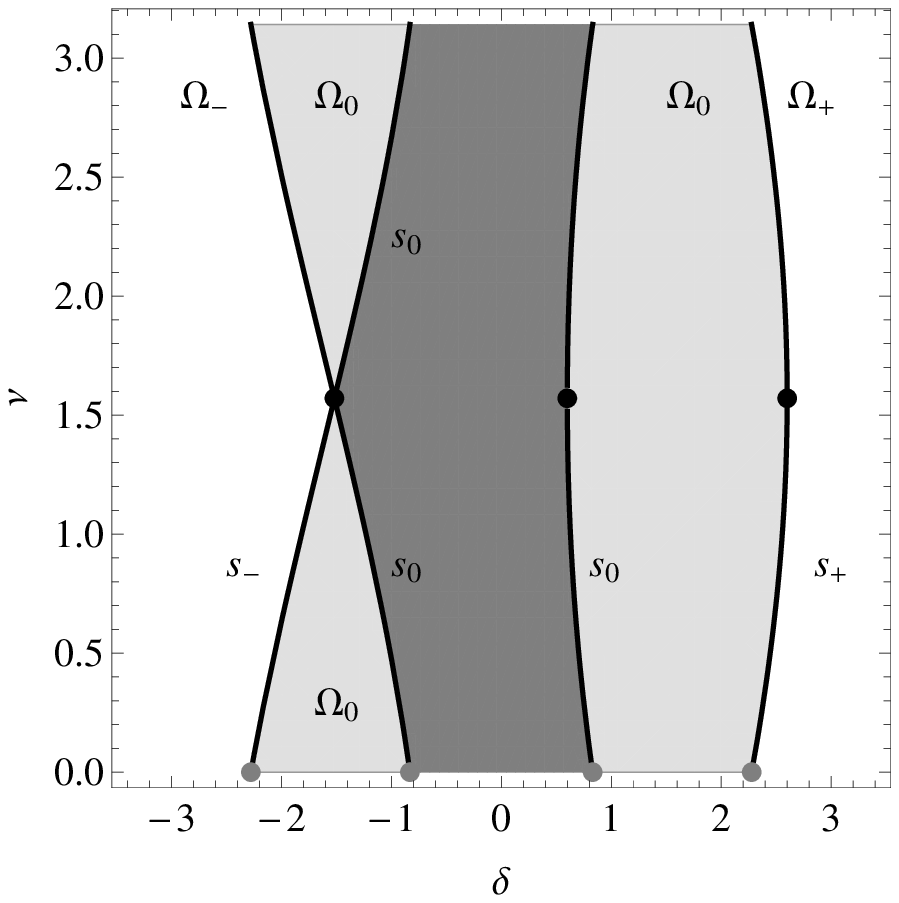}}
\caption{\small Partition of the parameter plane $(\delta,\nu)$.} \label{fig2}
\end{figure}
In this case, the equation $\mathcal P(\sigma;\delta,\nu,\kappa)=0$ has four different roots on the interval $[0,2\pi)$ if $(\delta,\nu)\in\Omega_{-}\cup \Omega_{+}$. If $(\delta,\nu)\in\Omega_{0}$, there are only two different roots (see Fig.~\ref{fig3}, a).
\begin{figure}
\centering
\subfigure[$\kappa=0.4$, $\displaystyle \nu=\frac{\pi}{2}$]{\includegraphics[width=0.3\linewidth]{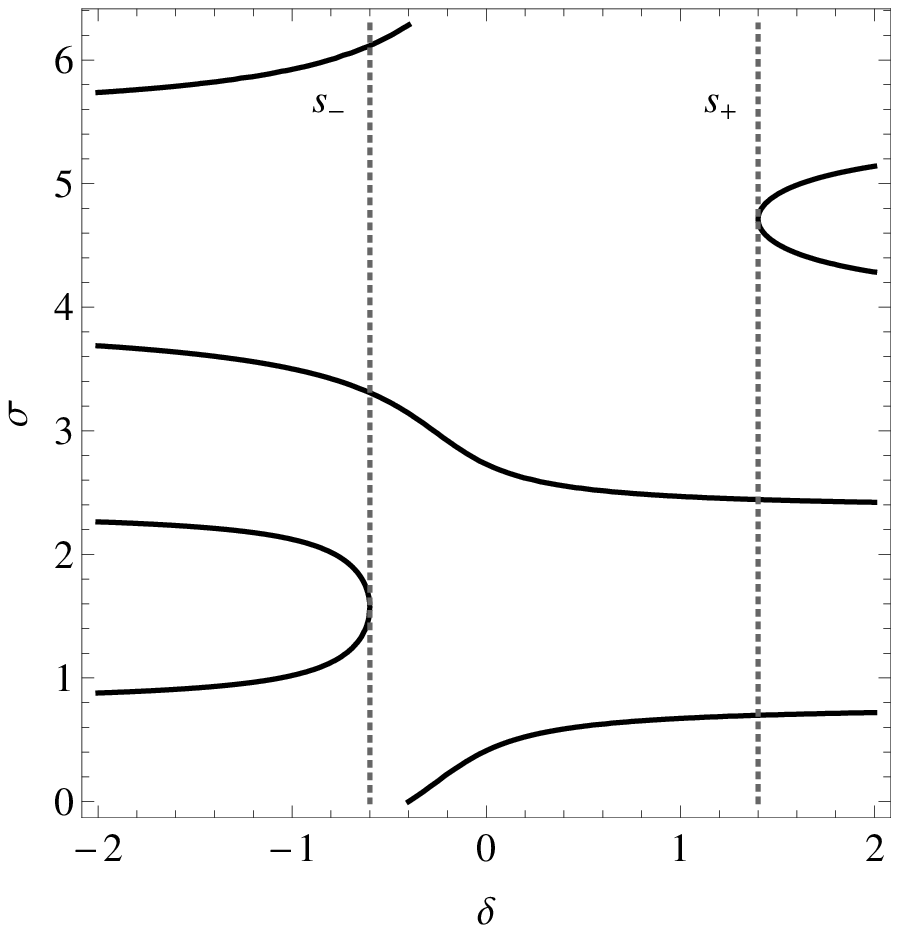}}
\hspace{2ex}
\subfigure[$\kappa=0.9$, $\displaystyle \nu=\frac{\pi}{2}$]{\includegraphics[width=0.3\linewidth]{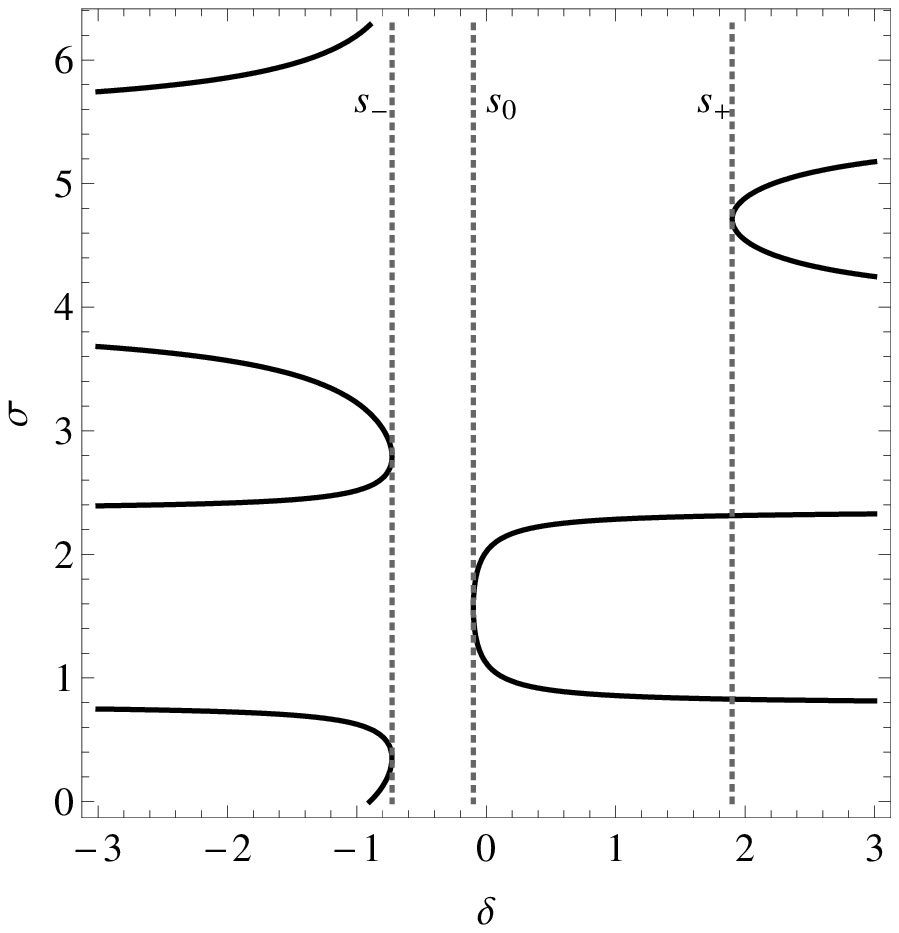}}
\hspace{2ex}
\subfigure[$\kappa=1.6$, $\displaystyle \nu=\frac{\pi}{4}$]{\includegraphics[width=0.3\linewidth]{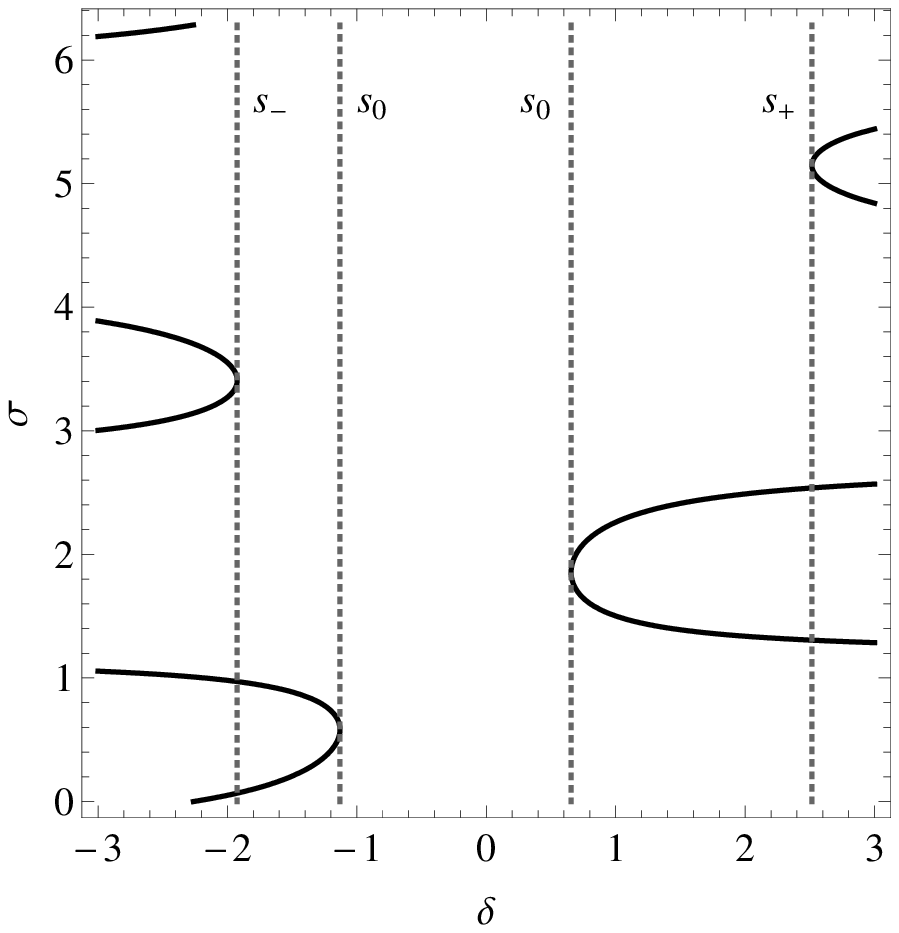}}
\caption{\small The roots to equation \eqref{teq} as functions of the parameter $\delta$. The vertical dotted lines correspond to $s_-$, $s_0$ and $ s_+$.} \label{fig3}
\end{figure}

If $3/4\leq \kappa<1$, there are three curves
\begin{eqnarray*}
    s_+&:=&\{(\delta,\nu)\in [n_2,m_2]\times [0,\pi): \sin\nu=p_1(\delta,\kappa)\},\\
  s_- &:=&\{(\delta,\nu)\in [n_1,m_1]\times [0,\pi): \sin\nu=p_1(\delta,\kappa)\}, \\
  s_0&:=&\{(\delta,\nu)\in [m_1,m_3]\times [0,\pi):\sin\nu=p_1(\delta,\kappa)\},
  \end{eqnarray*}
dividing the parameter plane $(\delta,\nu)$ into four parts (see Fig.~\ref{fig2},b):
\begin{eqnarray*}
    \Omega_{+}&:{=}&\{(\delta,\nu)\in\mathbb R\times[0,\pi): \delta>s_+\},
     \quad
   \Omega_{-}:{=}\{(\delta,\nu)\in\mathbb R\times[0,\pi): \delta<s_-\},\\
   \Omega_\ast&:=&\{\delta\in\mathbb [m_1,m_3], \arcsin p_1(\delta,\kappa)<\nu<\pi-\arcsin p_1(\delta,\kappa)\},\\
   \Omega_0&:=&\mathbb R\times[0,\pi)\setminus\overline{\big (\Omega_{+}\cup \Omega_{-}\cup\Omega_\ast\big)},
\end{eqnarray*}
where $n_1<n_2$ are the roots of the equation $p_1(n,\kappa)=0$, $m_1=-( \sqrt 2 \kappa + \sqrt{2\kappa^2-1})/\sqrt 8$, $m_3=\kappa-1$, $m_2=\kappa+1$
are the roots of the equation $p_1(m,\kappa)=1$.

If $\kappa=1$, the equation $p_1(\delta,\kappa)=0$ has three different roots $n_1<n_3<n_2$. In this case, the parameter plane $(\delta,\nu)$ is divided into the following parts (see Fig.~\ref{fig2},c):
\begin{eqnarray*}
    \Omega_{+}&:{=}&\{(\delta,\nu)\in\mathbb R\times[0,\pi): \delta>s_+\},
     \quad
   \Omega_{-}:{=}\{(\delta,\nu)\in\mathbb R\times[0,\pi): \delta<s_-\},\\
   \Omega_\ast&:=&\{\delta\in\mathbb [m_1,n_3]:\arcsin p_1(\delta,\kappa)<\nu<\pi-\arcsin p_1(\delta,\kappa)\},\\
   \Omega_0&:=&\mathbb R\times[0,\pi)\setminus\overline{\big (\Omega_{+}\cup \Omega_{-}\cup\Omega_\ast\big)},
\end{eqnarray*}
by the curves
\begin{eqnarray*}
    s_+&:=&\{(\delta,\nu)\in [n_2,m_2]\times [0,\pi): \sin\nu=p_1(\delta,\kappa)\},\\
  s_- &:=&\{(\delta,\nu)\in [n_1,m_1]\times [0,\pi): \sin\nu=p_1(\delta,\kappa)\}, \\
  s_0&:=&\{(\delta,\nu)\in [m_1,n_3]\times [0,\pi):\sin\nu=p_1(\delta,\kappa)\}\cup \{\delta=0,\nu\in[0,\pi)\},
  \end{eqnarray*}
where $m_1=-(1+\sqrt 2)/\sqrt8$, $m_2=2$, $p_1(m_{1,2},\kappa)\equiv 1$.

If $\kappa>1$, the equation $p_1(\delta,\kappa)=0$ has four different roots $n_1<n_3<n_4<n_2$ and the parameter plane $(\delta,\nu)$ is divided by the curves
\begin{eqnarray*}
    s_+&:=&\{(\delta,\nu)\in [n_2,m_2]\times [0,\pi): \sin\nu=p_1(\delta,\kappa)\},\\
  s_- &:=&\{(\delta,\nu)\in [n_1,m_1]\times [0,\pi): \sin\nu=p_1(\delta,\kappa)\}, \\
  s_0&:=&\{\delta\in [m_1,n_3]\cup[m_3,n_4], \nu\in[0,\pi): \sin\nu=p_1(\delta,\kappa)\}
  \end{eqnarray*}
 into four parts (see Fig.~\ref{fig2},d):
\begin{eqnarray*}
    \Omega_{+}&:{=}&\{(\delta,\nu)\in\mathbb R\times[0,\pi): \delta>s_+\},
     \quad
   \Omega_{-}:{=}\{(\delta,\nu)\in\mathbb R\times[0,\pi): \delta<s_-\},\\
   \Omega_\ast&:=&\{\delta\in\mathbb [m_1,n_3], \arcsin p_1(\delta,\kappa)<\nu<\pi-\arcsin p_1(\delta,\kappa)\}\cup \{(\delta,\nu)\in\mathbb [n_3,m_3]\times[0,\pi)\}\\
    && \cup\{\delta\in\mathbb [m_3,n_4],  0\leq \nu<\arcsin p_1(\delta,\kappa)\}\cup\{\delta\in\mathbb [m_3,n_4],  \pi-\arcsin p_1(\delta,\kappa)< \nu<\pi\},\\
   \Omega_0&:=&\mathbb R\times[0,\pi)\setminus\overline{\big (\Omega_{+}\cup \Omega_{-}\cup\Omega_\ast\big)},
\end{eqnarray*}
where $
m_1=-( \sqrt 2 \kappa + \sqrt{2\kappa^2-1})/\sqrt8$, $m_2=\kappa+1$, $p_1(m_{1,2},\kappa)\equiv 1$.

As above, if $(\delta,\nu)\in\Omega_{-}\cup \Omega_{+}$, the equation $\mathcal P(\sigma;\delta,\nu,\kappa)=0$ has four different roots on the interval $[0,2\pi)$. If $(\delta,\nu)\in\Omega_{0}$, there are only two different roots. If $(\delta,\nu)\in\Omega_\ast$,  the equation has no solutions (see Fig.~\ref{fig3}, b,c,d).

Thus we have
\begin{Th}\label{Th1}
    If  $(\delta,\nu)\in\Omega_{+}\cup\Omega_{-}$ and $\kappa>0$, system \eqref{MS} has four different solutions with asymptotic expansion in the form of a series \eqref{PAS}.
    If $\kappa>0$ and $(\delta,\nu)\in\Omega_0$, system \eqref{MS} has 2 different solutions with asymptotic expansion in
the form of a series \eqref{PAS}.
\end{Th}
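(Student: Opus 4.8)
The plan is to split the argument into a \emph{formal} part, which counts the candidate series \eqref{PAS}, and an \emph{analytic} part, which upgrades each formal series to a genuine solution. The formal part is essentially carried out in the text preceding the theorem; my task is to organize it as a root count and then to supply the realization step.

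First I would record that the leading-order balance is forced: substituting \eqref{PAS} into \eqref{MS} and matching the two highest powers of $\tau$ gives $\rho_{-1}=\sqrt\lambda$, $\rho_0=0$, and reduces the determination of $\psi_0=\sigma$ to the single transcendental equation $\mathcal P(\sigma;\delta,\nu,\kappa)=0$. Thus the number of admissible leading phases equals the number of roots of $\mathcal P=0$ on $[0,2\pi)$. For each root at which $\mathcal P'(\sigma)\neq0$, the recursion displayed in the excerpt,
\begin{gather*}
2\sqrt\lambda\,\rho_k=\mathcal A_k,\qquad \mathcal P'(\sigma)\,\psi_k=\mathcal B_k,\qquad k\geq 1,
\end{gather*}
is a forward recursion: the right-hand sides $\mathcal A_k,\mathcal B_k$ depend only on $\rho_{-1},\dots,\rho_{k-1},\sigma,\psi_1,\dots,\psi_{k-1}$, and the multipliers $2\sqrt\lambda\neq0$ and $\mathcal P'(\sigma)\neq0$ are nonzero, so $\rho_k,\psi_k$ are uniquely determined at every order. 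Hence each \emph{simple} root of $\mathcal P=0$ produces exactly one formal series \eqref{PAS}, and distinct roots yield series with distinct leading phases, hence distinct series.

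Next I would count the simple roots in each region. The curves $s_\pm$ (and $s_0$) bounding $\Omega_+,\Omega_-,\Omega_0$ are by construction exactly the loci on which $\mathcal P=0$ acquires a multiple root, i.e. where $(\delta,\nu,\kappa)\in S=S_1\cup S_2$; equivalently $S$ is the set on which $\mathcal P=\mathcal P'=0$ hold simultaneously. Since $\Omega_+,\Omega_-,\Omega_0$ are open components of the complement of these curves, every root of $\mathcal P=0$ for $(\delta,\nu)$ in these regions satisfies $\mathcal P'(\sigma)\neq0$ and is therefore simple. The root count recorded just before the theorem then gives four simple roots when $(\delta,\nu)\in\Omega_+\cup\Omega_-$ and two simple roots when $(\delta,\nu)\in\Omega_0$, so there are four (respectively two) admissible formal series.

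Finally, and this is the crux, I would promote each formal series to an actual solution of \eqref{MS} with the prescribed asymptotics. Fixing a large $N$, let $(\rho^{(N)},\psi^{(N)})$ be the partial sums of \eqref{PAS} truncated at order $\tau^{-N/2}$. Substituting $\rho=\rho^{(N)}+r$, $\psi=\psi^{(N)}+\phi$ into \eqref{MS} produces a system for the remainders $(r,\phi)$ whose forcing term decays like a negative power of $\tau$ that can be made arbitrarily large by increasing $N$, and whose linear part is governed by $\mathcal P'(\sigma)\neq0$ together with the growing factor $\sqrt{\lambda\tau}$ in the second equation. I would recast this remainder system in integral form and run a contraction mapping argument on a weighted space of continuous functions on $[\tau_0,\infty)$ decaying like the forcing, for $\tau_0$ large enough. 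The \textbf{main obstacle} is precisely this analytic step: the system is non-autonomous with an \emph{unbounded} coefficient $\sqrt{\lambda\tau}$, so standard asymptotic-integration theorems do not apply verbatim, and one must exploit the sign structure of the leading linearization (nondegeneracy $\mathcal P'(\sigma)\neq0$ and the dissipation $\gamma_0\neq0$, i.e. $\kappa>0$) to close the a priori estimates and dominate the nonlinear terms uniformly in $\tau$. Once the remainder is shown to exist and to decay faster than the last retained term, uniqueness of asymptotic expansions identifies the constructed solution with the full series \eqref{PAS}, and distinctness of the leading phases guarantees that the four (respectively two) solutions are genuinely different, completing the proof.
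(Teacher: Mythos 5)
Your formal part coincides with the paper's: the leading-order balance forces $\rho_{-1}=\sqrt\lambda$, $\rho_0=0$, $\psi_0=\sigma$ with $\mathcal P(\sigma;\delta,\nu,\kappa)=0$; the regions $\Omega_\pm$, $\Omega_0$ lie off the multiple-root locus $S$, so every root there is simple; and the forward recursion with nonzero multipliers $2\sqrt\lambda$ and $\mathcal P'(\sigma;\delta,\nu,\kappa)$ determines a unique formal series per root, so the root count (four versus two) gives the stated numbers. Where you genuinely diverge is the realization step. The paper does not carry it out: it invokes the theorem of Kuznetsov \cite{AN89} (see also Kozlov--Furta \cite{KF13}) on the existence of true solutions admitting a prescribed formal power-law asymptotic expansion, and then appeals to Kalyakin's comparison theorems \cite{LK14} to extend these solutions to the whole semi-axis. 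You instead sketch a self-contained truncation-plus-contraction argument on a weighted space; this is essentially the machinery underlying the cited results, and your diagnosis of the obstacle (the unbounded coefficient $\sqrt{\lambda\tau}$, hence eigenvalues of the frozen linearization growing like $\tau^{1/2}$) is accurate. To close it you would still have to split according to the sign of $\mathcal P'(\sigma;\delta,\nu,\kappa)$: for $\mathcal P'<0$ the linearization is a saddle and the integral operator must treat the unstable component by integration backward from $\tau=+\infty$, while for $\mathcal P'>0$ the eigenvalues are purely imaginary and the estimates must use oscillation together with the dissipation $\gamma_0>0$. The citation route buys brevity; your route would make the argument self-contained and makes explicit that $\mathcal P'(\sigma;\delta,\nu,\kappa)\neq 0$ is the only nondegeneracy required. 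One small omission relative to the paper: your construction yields solutions only on $[\tau_0,\infty)$ for large $\tau_0$, whereas the paper additionally extends them to the semi-axis via the comparison theorems; this is immaterial for the asymptotic claim itself but worth a sentence.
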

The existence of solutions $\rho_\ast(\tau)$, $\psi_\ast(\tau)$ with the asymptotics \eqref{PAS} as $\tau\geq \tau_\ast$ follows from~\cite{AN89,KF13}. The comparison theorems~\cite{LK14} applied to system \eqref{MS} guarantees that the solutions can be extended to the semi-axis.

\subsection{The roots of multiplicity 2 }
If $(\delta,\nu)\in s_-\cup s_+\cup s_0$, there exists $\sigma$ such that $\mathcal P(\sigma;\delta,\nu,\kappa)=0$ and  $\mathcal P'(\sigma;\delta,\nu,\kappa)=0$. It can easily be checked that $\sigma\in\{\varsigma: \sin \varsigma=z_{1,2}(\delta,\kappa)\}$.
The multiple roots exist if $(\delta,\kappa)\in \mathfrak D_m$, where
$\mathfrak D_m:=\big(\{0\leq p_1(\delta,\kappa)\leq 1\}\cup\{0\leq p_2(\delta,\kappa)\leq 1\}\big)\cap \big(\{|z_1(\delta,\kappa)|\leq 1\}\cup \{ | z_2(\delta,\kappa)|\leq 1\}\big)\cap \{\kappa^2+3\delta^2\geq 3/4\}$ (see Fig.~\ref{fig4}).
\begin{figure}
\centering{
\includegraphics[width=0.4\linewidth]{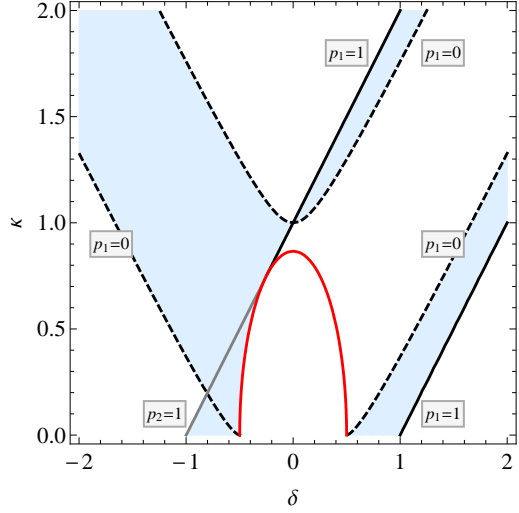}
}
\caption{\small Existence domain (shaded area) of multiple roots to equation \eqref{teq}. } \label{fig4}
\end{figure}
In addition, suppose that $\mathcal P''(\sigma;\delta,\nu,\kappa)\equiv -3\sin\sigma+4\kappa \neq 0$, then $\sigma$ is the root of multiplicity 2.
In this case, $\psi_1$ is determined from the equation:
\begin{gather}
\label{Psi12}    \mathcal P''(\sigma;\delta,\nu,\kappa)\frac{\psi_1^2}{2} = \mathcal C(\sigma), \\
\nonumber    \mathcal C(\sigma)\equiv     \frac{\sin2\sigma-4\lambda^2}{8\lambda\sqrt\lambda}-\Big(\beta_1\sqrt\lambda\sin(2\sigma+\nu)-\alpha_1\sin\sigma +\gamma_1\sqrt\lambda\Big).
\end{gather}
It follows that the asymptotic solution in the form \eqref{PAS} does not exist when
$\mathcal P''(\sigma;\delta,\nu,\kappa) \mathcal C(\sigma)<0$.
 If $\mathcal P''(\sigma;\delta,\nu,\kappa) \mathcal C(\sigma)>0$, equation \eqref{Psi12} has two different roots:
\begin{gather*}
   \psi_1= \pm \phi, \quad  \phi:= \sqrt{ \frac{2 \mathcal C(\sigma)}{\mathcal P''(\sigma;\delta,\nu,\kappa)}}.
\end{gather*}
Note that if $\alpha_1=\beta_1=\gamma_1=0$ and $\lambda>1/2$, then $\mathcal C(\sigma)<0$ for all $\sigma\in\mathbb R$ and a suitable root is $\sigma$ such that $\sin\sigma=z_1(\delta,\kappa)$. The remaining coefficients $\rho_k$, $\psi_k$ are determined from the following recurrent system of equations:
\begin{gather*}
        \begin{split}
            2\sqrt \lambda \rho_k & =\mathcal A_k(\rho_{-1},\dots,\rho_{k-1}, \sigma,\psi_1,\dots,\psi_{k-1}),  \\
            \mathcal P''(\sigma;\delta,\nu,\kappa)\psi_1\psi_k + \frac{\sin\sigma}{\sqrt\lambda} \rho_k& = \mathcal C_k(\rho_{-1},\dots,\rho_{k-1},\sigma, \psi_1,\dots,\psi_{k-1}),
    \end{split}
\end{gather*}
where
\begin{eqnarray*}
    \mathcal C_2 &=&- \frac{\psi_1^3}{6} \mathcal P'''(\sigma;\delta,\nu,\kappa)+ \alpha_1\psi_1\cos\sigma-2\psi_1 (\beta_1\sqrt\lambda+\beta_0\rho_1)\cos(2\sigma+\nu), \\
    \mathcal C_3&=& \frac{\rho_1}{2}-\gamma_1\rho_1-\gamma_2\sqrt\lambda-\frac{\psi_2^2}{2}\mathcal P''(\sigma;\delta,\nu,\kappa) - \frac{\psi_1^2\psi_2}{2}\mathcal P'''(\sigma;\delta,\nu,\kappa)-\frac{\psi_1^4}{24}\mathcal P^{(4)}(\sigma;\delta,\nu,\kappa) \\
    && -\alpha_1\Big(\psi_2 \cos\sigma-\frac{\psi_1^2\sin\sigma}{2}\Big) -\big(2\psi_2(\beta_1\sqrt\lambda+\beta_0\rho_1)+2\psi_1\beta_0\rho_2\big)\cos(2\sigma+\nu)\\
    && +\big(2\psi_1^2 (\beta_1\sqrt\lambda+\beta_0\rho_1)+\alpha_2-\beta_2\sqrt\lambda-\beta_1\rho_1\big)\sin(2\sigma+\nu),
\end{eqnarray*}
etc.

\subsection{The roots of multiplicity 3}

Now suppose  $(\delta,\nu)\in s_-\cup s_+\cup s_0$ and $\sigma$ is the root of equation \eqref{teq} such that  $\mathcal P'(\sigma;\delta,\nu,\kappa)=0$ and $\mathcal P''(\sigma;\delta,\nu,\kappa)=0$. These imply that $\mathcal P'''(\sigma;\delta,\nu,\kappa)\equiv  -3\cos\sigma$, $\sigma\in\{\varsigma: \sin \varsigma=4\kappa/3\}$, $\kappa^2+3\delta^2=3/4$, and $\sin\nu=-\kappa \delta^{-1}(1+32\delta^2)/9$. Let $\mathcal P'''(\sigma;\delta,\nu,\kappa)\neq 0$. In this case, the asymptotic solutions are constructed in the following form:
\begin{gather}
    \label{PAS3}
        \rho (\tau)=\rho_{-1}\sqrt \tau+\rho_0+\sum_{k=1}^{\infty} \rho_k \tau^{-\frac k6}, \quad
        \psi (\tau)=\sigma+\sum_{k=1}^\infty\psi_k\tau^{-\frac k6}, \quad \tau\to\infty.
\end{gather}
It can easily be checked that
\begin{gather*}
    \rho_{-1}=\sqrt\lambda, \quad  \rho_0=\rho_1=\rho_2=0,\quad \rho_3=-\frac{\cos\sigma}{4\lambda},\quad
    \psi_1 =0,\quad    \psi_2= \chi, \\
     \chi :=  \Big[\frac{\mathcal N(\sigma)}{\mathcal P'''(\sigma;\delta,\nu,\kappa) }\Big]^{\frac 13}, \quad
      \mathcal N(\sigma)\equiv \frac{3(\sin 2\sigma-4\lambda^2)}{4\lambda\sqrt\lambda}-6\Big(\beta_1\sqrt\lambda\sin(2\sigma+\nu)-\alpha_1\sin\sigma +\gamma_1\sqrt\lambda\Big).
\end{gather*}
The remaining coefficient $\rho_{k+1}$, $\psi_k$ as $k\geq 3$ are determined from the following chain of equations:
\begin{align*}
            2 \sqrt\lambda \rho_{k+1} & = \mathcal M_{k+1}(\rho_{-1},\dots,\rho_{k}, \sigma,\psi_1,\dots,\psi_{k-1}),  \\
            \mathcal P'''(\sigma;\delta,\nu)\frac{\psi_2^2\psi_k}{2}+\frac{\sin\sigma}{\sqrt\lambda}\rho_{k+1} & = \mathcal N_k(\rho_{-1},\dots,\rho_{k}, \sigma,\psi_1,\dots,\psi_{k-1}),
\end{align*}
where
\begin{align*}
    &\mathcal M_4=0, \quad \mathcal M_5=\frac{\psi_2\sin\sigma}{2\sqrt\lambda}, \quad \mathcal M_6=\frac{\psi_3\sin\sigma}{2\sqrt\lambda}, \quad \mathcal M_7=\frac{\psi_4\sin\sigma-\psi_2^2\cos\sigma}{2\sqrt\lambda}, \\
    & \mathcal N_3=0, \quad \mathcal N_4=-\psi_2(\beta_1\sqrt\lambda+\beta_0\rho_3-\alpha_1)\cos \sigma-\mathcal P^{'''}(\sigma;\delta,\nu,\kappa)\frac{\psi_2 \psi_3^2}{2}-\mathcal P^{(4)}(\sigma;\delta,\nu,\kappa)\frac{\psi_2^4}{24},\\
    & \mathcal N_5=-\psi_3(\beta_1\sqrt\lambda+\beta_0\rho_3-\alpha_1)\cos\sigma-\mathcal P^{'''}(\sigma;\delta,\nu,\kappa)\frac{6\psi_2\psi_3\psi_4+\psi_3^3}{6}-\mathcal P^{(4)}(\sigma;\delta,\nu,\kappa)\frac{\psi_2^3\psi_3}{6},
\end{align*}
etc.
It can easily be checked that this system is solvable whenever $\mathcal N(\sigma)\neq 0$. Note that $\mathcal N(\sigma)\neq 0$ for any $\sigma\in\mathbb R$ if $\alpha_1=\beta_1=\gamma_1=0$ and $\lambda>1/2$.

\subsection{The roots of multiplicity 4}

Let  $(\delta,\nu)\in s_-\cup s_+\cup s_0$ and $\sigma$ be the root of equation \eqref{teq} such that  $\mathcal P'(\sigma;\delta,\nu,\kappa)=0$, $\mathcal P''(\sigma;\delta,\nu,\kappa)=0$ and $\mathcal P'''(\sigma;\delta,\nu,\kappa)=0$. In this case,  $\kappa=3/4$, $\delta=-1/4$, $\nu=\pi/2$, $\sigma=\pi/2$ and $\mathcal P^{(4)}(\sigma;\delta,\nu,\kappa)=3$.
The asymptotic solutions are constructed in the following form:
\begin{gather}
    \label{PAS4}
        \rho (\tau)=\rho_{-1}\sqrt \tau+\rho_0+\sum_{k=1}^{\infty} \rho_k \tau^{-\frac k4}, \quad
        \psi (\tau)=\sigma+\sum_{k=1}^\infty\psi_k\tau^{-\frac k4}, \quad \tau\to\infty.
\end{gather}
It follows easily that
\begin{gather*}
    \rho_{-1}=\sqrt\lambda, \quad  \rho_0=\rho_1=\rho_2=0,\quad
    \psi_1^4 = \mathcal  Q, \quad   \mathcal Q\equiv8 \alpha_1+4\sqrt\lambda(2\beta_1-2\gamma_1-1).
\end{gather*}
If $\mathcal Q> 0$, the last equation has two real, distinct roots: $\psi_1=\pm \xi$, $\xi := \mathcal Q^{1/4}$.
The remaining coefficient $\rho_{k+1}$, $\psi_k$ as $k\geq 2$ are determined from the following chain of equations:
\begin{align*}
             2\sqrt\lambda \rho_{k+1} & = \mathcal T_{k+1}(\rho_{-1},\dots,\rho_{k}, \psi_1,\dots,\psi_{k-1}),  \\
   \psi_1^3\psi_k+\frac{2}{\sqrt\lambda}\rho_{k+1} & = \mathcal Q_k(\rho_{-1},\dots,\rho_{k}, \psi_1,\dots,\psi_{k-1}),
\end{align*}
where
\begin{align*}
    &\mathcal T_3=\frac{\psi_1}{2\sqrt\lambda}, \quad \mathcal T_4=\frac{\psi_2}{2\sqrt\lambda}, \quad \mathcal T_5=\frac{\psi_1^3+3\psi_3}{6\sqrt\lambda}, \quad \mathcal T_6=\frac{\psi_1^2 \psi_2 + \psi_4 - 2 \sqrt \lambda \rho_2^2}{2\sqrt\lambda}, \\
    &\mathcal Q_2=0, \quad \mathcal Q_3=
     - \alpha_1   \psi_1^2 - 4 \beta_1 \sqrt\lambda \psi_1^2 + \frac{\psi_1^6}{24} - \frac{3\psi_1^2 \psi_2^2}{2}  + \frac{\psi_1^2 \rho_2}{\sqrt\lambda},\\
    &\mathcal Q_4=
    -2 \alpha_1 \psi_1 \psi_2 - 8 \beta_1 \sqrt\lambda \psi_1 \psi_2 +\psi_1^5 \psi_2 -\psi_1 \psi_2^3 -3 \psi_1^2 \psi_2 \psi_3 -\frac{2\psi_1 \psi_2 \rho_2 }{\sqrt \lambda} +\frac{ \psi_1^2 \rho_3}{\sqrt\lambda},
\end{align*}
etc.
If $\mathcal Q\leq 0$, the asymptotic solution in the form \eqref{PAS4} does not exists.

Thus we have the following
\begin{Th} Let $(\delta,\nu)\in s_{-} \cup s_{-}\cup s_0$, $\kappa>0$ and $\sigma$ be a root of equation \eqref{teq}.
  \begin{itemize}
  \item If $\mathcal P'(\sigma;\delta,\nu,\kappa)\neq 0$, then  there exists a solution $\rho_\ast(\tau)$, $\psi_\ast(\tau)$ with asymptotic expansion in the form \eqref{PAS} with $\psi_0=\sigma$, $\psi_1=0$, $\psi_2=\theta$.

  \item If $\mathcal P'(\sigma;\delta,\nu,\kappa)=0$ and $\mathcal P''(\sigma;\delta,\nu,\kappa)\mathcal C(\sigma)<0$, then system \eqref{MS} has two solutions $\rho_{\ast}(\tau)$, $\psi_{\ast}(\tau)$ with asymptotic expansion in the form \eqref{PAS} with $\psi_0=\sigma$, $\psi_1=\pm\phi$.

  \item If $\mathcal P'(\sigma;\delta,\nu,\kappa)=0$, $\mathcal P''(\sigma;\delta,\nu,\kappa)=0$ and $\mathcal N(\sigma)\neq 0$, then system \eqref{MS} has solution $\rho_{\ast}(\tau)$, $\psi_{\ast}(\tau)$ with asymptotic expansion in the form \eqref{PAS3} with $\psi_1=0$, $\psi_2=\chi$.

  \item If $\mathcal P'(\sigma;\delta,\nu,\kappa)=0$, $\mathcal P''(\sigma;\delta,\nu,\kappa)=0$,  $\mathcal P'''(\sigma;\delta,\nu,\kappa)=0$  and $\mathcal Q> 0$, then system \eqref{MS} has two solutions $\rho_{\ast}(\tau)$, $\psi_{\ast}(\tau)$ with asymptotic expansion in the form \eqref{PAS4} with $\psi_1=\pm \xi$.
\end{itemize}
\end{Th}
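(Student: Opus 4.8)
The plan is to prove all four cases by a single two-stage scheme used already in the subsections preceding the theorem: first construct a \emph{formal} asymptotic series solution on the appropriate fractional power grid, and then upgrade it to a genuine solution on a semi-axis by invoking the cited existence and continuation results. At the leading orders the substitution of the trial series into \eqref{MS} always fixes $\rho_{-1}=\sqrt\lambda$, $\rho_0=0$ and $\psi_0=\sigma$ with $\mathcal P(\sigma;\delta,\nu,\kappa)=0$, so the admissibility of $\sigma$ is exactly the hypothesis $(\delta,\nu)\in s_-\cup s_+\cup s_0$. The decisive structural point is that the vanishing of successive derivatives of $\mathcal P$ at $\sigma$ dictates the expansion scale: writing $\psi=\sigma+\Psi$, the phase-balance equation equates the leading nonlinear term $\mathcal P^{(m)}(\sigma)\Psi^m/m!$ against a driving term of order $\tau^{-1}$ produced by the chirp $\lambda\tau$ and the dissipation $\gamma$. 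Hence for a root of exact multiplicity $m$ the leading phase correction sits at scale $\tau^{-1/m}$ (so that $(\tau^{-1/m})^m=\tau^{-1}$ balances the driving), and the common refinement of this scale with the amplitude scale $\tau^{-1/2}$ is $\tau^{-1/2}$ for $m=1,2$, $\tau^{-1/6}$ for $m=3$, and $\tau^{-1/4}$ for $m=4$, which is precisely the content of \eqref{PAS}, \eqref{PAS3}, \eqref{PAS4}.

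With the scale fixed, the second step is to verify that all higher coefficients are determined by the displayed triangular recurrences. In each case the $2\times2$ matrix acting on $(\rho_k,\psi_k)$ is lower triangular with diagonal entries $2\sqrt\lambda$ and a factor proportional to $\mathcal P^{(m)}(\sigma)$ times a power of the leading phase coefficient: this factor is $\mathcal P'(\sigma)$ for $m=1$, $\mathcal P''(\sigma)\psi_1$ for $m=2$, $\mathcal P'''(\sigma)\psi_2^2/2$ for $m=3$, and $\psi_1^3$ for $m=4$. The recurrence is therefore uniquely solvable at every order exactly when this diagonal factor does not vanish. For $m=1$ this is the hypothesis $\mathcal P'(\sigma)\neq0$; for $m\geq2$, since $\mathcal P^{(m)}(\sigma)\neq0$ is part of the exact-multiplicity assumption, solvability reduces to the nonvanishing of the leading phase coefficient, which is guaranteed by the conditions $\mathcal C(\sigma)\neq0$, $\mathcal N(\sigma)\neq0$, $\mathcal Q\neq0$.

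The counting of branches then follows from the leading phase equation alone. For $m=1$ one gets $\psi_1=0$, $\psi_2=\theta$ and a single series. For $m=2$, equation \eqref{Psi12}, namely $\mathcal P''(\sigma)\psi_1^2/2=\mathcal C(\sigma)$, has two real roots $\psi_1=\pm\phi$ precisely when $\mathcal P''(\sigma)\,\mathcal C(\sigma)>0$ (the condition under which $\phi$ in \eqref{Psi12} is real), yielding two series. For $m=3$ the cube root $\psi_2=\chi$ is real and unique, giving one series. For $m=4$ the equation $\psi_1^4=\mathcal Q$ has two real roots $\psi_1=\pm\xi$ iff $\mathcal Q>0$, giving two series. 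Once the formal series are available, the existence of actual solutions $\rho_\ast(\tau)$, $\psi_\ast(\tau)$ with the prescribed asymptotics on a ray $\tau\geq\tau_\ast$ follows from the theory of asymptotic integration \cite{AN89,KF13}, applied to the error variables obtained after subtracting a sufficiently high-order truncation; the comparison theorems of \cite{LK14} then extend each solution to the whole semi-axis.

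The main obstacle I anticipate lies in the degenerate regimes $m\in\{2,3,4\}$: identifying the correct fractional grid and proving that the induction never stalls. The delicate feature is that the diagonal factor controlling solvability is itself built from the leading phase coefficient ($\psi_1$ or $\psi_2$), so one must first solve the nonlinear leading equation, confirm that its root is nonzero under the stated sign and nonvanishing conditions, and only afterwards conclude that every higher linear system is invertible. Closing this argument requires the bookkeeping check that the right-hand sides $\mathcal M_k,\mathcal N_k,\mathcal T_k,\mathcal Q_k$ depend only on already-computed coefficients, so that the recurrence is genuinely triangular and the induction terminates at each order.
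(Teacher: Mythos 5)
Your proposal is correct and follows essentially the same route as the paper: construct the formal series on the appropriate fractional power grid for each multiplicity (which the paper carries out in the subsections preceding the theorem), then invoke the existence results of Kuznetsov and Kozlov--Furta for a genuine solution with that asymptotics on a ray, and the comparison theorems of Kalyakin for continuation to the whole semi-axis. One remark: in the multiplicity-2 case you correctly derive that the two branches $\psi_1=\pm\phi$ exist precisely when $\mathcal P''(\sigma;\delta,\nu,\kappa)\,\mathcal C(\sigma)>0$, which matches the paper's own derivation from \eqref{Psi12} (and its statement that no solution of the form \eqref{PAS} exists when this product is negative), so the inequality $\mathcal P''(\sigma;\delta,\nu,\kappa)\mathcal C(\sigma)<0$ in the second bullet of the theorem as printed is evidently a sign typo that your argument implicitly corrects.
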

The proof is the same as that of Theorem~\ref{Th1}.

\section{Stability analysis}
\label{sec3}

\subsection{ Linear analysis} Let $\rho_\ast(\tau)$, $\psi_\ast(\tau)$ be one of the particular autoresonant solutions with asymptotics \eqref{PAS}, \eqref{PAS3} or \eqref{PAS4}.
The substitution $\rho(\tau)=\rho_\ast(\tau)+R(\tau)$, $\psi(\tau)=\psi_\ast(\tau)+\Psi(\tau)$ into \eqref{MS} gives the following system with a fixed point at $(0,0)$:
\begin{gather}
\label{ShiftSys}
\begin{split}
&\frac{dR}{d\tau}+\gamma(\tau)R=
    \alpha(\tau)\Big(\sin(\psi_\ast+\Psi)-\sin\psi_\ast\Big) -
    \beta(\tau)\Big( (\rho_\ast+R)\sin(2\psi_\ast+2\Psi+\nu)-\rho_\ast\sin(2\psi_\ast+\nu)\Big), \\
&\frac{d\Psi}{d\tau}=2\rho_\ast R+R^2
+\alpha(\tau)\Big(
\frac{\cos(\psi_\ast+\Psi)}{\rho_\ast+R}-\frac{\cos\psi_\ast}{\rho_\ast}\Big)
- \beta(\tau)\Big(\cos(2\psi_\ast+2\Psi+\nu)-\cos(2\psi_\ast+\nu)\Big).
\end{split}
\end{gather}
Consider the linearized system:
\begin{gather*}
    \frac{d}{d\tau}\begin{pmatrix} R \\ \Psi \end{pmatrix} = {\bf \Lambda}(\tau) \begin{pmatrix}R \\ \Psi \end{pmatrix}, \ \
    {\bf \Lambda}(\tau):=
        \begin{pmatrix}
            \displaystyle  -\gamma(\tau)-\beta(\tau) \sin(2\psi_\ast+\nu) & \displaystyle \alpha(\tau)\cos\psi_\ast-2  \beta(\tau)  \rho_\ast \cos(2\psi_\ast+\nu) \\
            \displaystyle 2\rho_\ast - \frac{\alpha(\tau)\cos\psi_\ast}{\rho_\ast^2}  & \displaystyle -\frac{\alpha(\tau) \sin\psi_\ast}{\rho_\ast} +2 \beta  (\tau)\sin(2\psi_\ast+\nu) \end{pmatrix}.
\end{gather*}
Define
\begin{gather*}
    \hat\rho(\tau):=    \frac{\rho_\ast(\tau)}{\sqrt{\lambda\tau}}-1, \quad   \hat\psi(\tau):=\psi_\ast(\tau)-\sigma,
\end{gather*}
for $\tau\geq 0$,
where $\sigma$ is one of the roots to equation \eqref{teq}. Then the functions $\hat \rho(\tau)$ and $\hat \psi(\tau)$ have the following asymptotics as $\tau\to \infty$: $\hat\rho(\tau)=\mathcal O(\tau^{-1})$ and
\begin{itemize}
  \item $\hat\psi(\tau)=\theta \tau^{-1}+\mathcal O(\tau^{-\frac 32})$ if $\sigma$ is the simple root,
  \item $\hat\psi(\tau)=\pm \phi \tau^{-\frac 12}+\mathcal O(\tau^{-1})$ if $\sigma$ is the root of multiplicity 2,
  \item  $\hat\psi(\tau)=\chi \tau^{-\frac 13}+\mathcal O(\tau^{-\frac 23})$ if $\sigma$ is the root of multiplicity 3,
  \item  $\hat\psi(\tau)=\pm\xi \tau^{-\frac 14}+\mathcal O(\tau^{-\frac 12})$ if $\sigma$ is the root of multiplicity 4.
\end{itemize}
Then the roots of the corresponding characteristic equation $|{\bf \Lambda}(\tau)-z {\bf I}|=0$ can be represented in the form
\begin{gather*}
    z_{\pm}(\tau)=\frac{1}{2}\Big( {\hbox{\rm tr}}  {\bf\Lambda}(\tau)   \pm \sqrt{   D(\tau) }\Big),
\end{gather*}
where
\begin{eqnarray*}
     {\hbox{\rm tr}} {\bf\Lambda}(\tau)  & = &\frac{1}{\sqrt\lambda}\Big(-2\kappa+\mathcal P(\psi_\ast;\delta,\nu,\kappa)\Big)+ \mathcal O(\tau^{-1}), \\
    D(\tau) & :=& \big({\hbox{\rm tr}}{\bf\Lambda}(\tau)\big)^2 -4 {\hbox{\rm det}}  {\bf\Lambda}(\tau)  = -8\tau \sqrt{\lambda}\mathcal P'(\psi_\ast;\delta,\nu,\kappa)+\mathcal O(1)
\end{eqnarray*}
as $\tau\to \infty$.
Therefore, if $\sigma$ is the simple root of \eqref{teq} such that $\mathcal P'(\sigma;\delta,\nu,\kappa)<0$, then $z_\pm(\tau)$ are real of different signs:
\begin{gather*}
z_\pm(\tau)=\pm \tau^{\frac 12} (4\lambda)^{\frac 14} \sqrt{ -\mathcal P'(\sigma;\delta,\nu,\kappa)}+\mathcal O(1),\quad \tau\to\infty.
\end{gather*}
This implies that the fixed point $(0,0)$ of \eqref{ShiftSys} is a saddle in the asymptotic limit, and the corresponding solutions $\rho_\ast(\tau)$, $\psi_\ast(\tau)$ to system \eqref{MS} are unstable (see, for example, \cite{HK02}).

Similarly, if $\sigma$ is the root  of multiplicity 2 such that $\mp\mathcal  P''(\sigma;\delta,\nu,\kappa)>0$ and $\hat\psi(\tau)=\pm\phi\tau^{-1/2}+\mathcal O(\tau^{-1})$, then
\begin{gather*}
z_\pm(\tau)=\pm \tau^{\frac 14}  (4\lambda)^{\frac 14} \sqrt{|\phi\mathcal P''(\sigma;\delta,\nu,\kappa)|}+\mathcal O(1),\quad \tau\to\infty.
\end{gather*}
In this case, the fixed point $(0,0)$ of \eqref{ShiftSys} and the corresponding particular solution to system \eqref{MS} are both unstable.

In the same way, the fixed point of the linearized system is unstable when $\sigma$ is the root  of multiplicity 3 and $\mathcal P'''(\sigma;\delta,\nu)<0$. In this case, the eigenvalues have the following asymptotics:
\begin{gather*}
z_\pm(\tau)=\pm \tau ^{\frac 16} \lambda^{\frac 14}  \sqrt{- \chi^2 \mathcal P'''(\sigma;\delta,\nu,\kappa)}+\mathcal O(1),\quad \tau\to\infty.
\end{gather*}

If $\sigma$ is the root  of multiplicity 4 and $\hat\psi(\tau)=-\xi \tau^{- 1/4}+\mathcal O(\tau^{-1/2})$, then
\begin{gather*}
z_\pm(\tau)=\pm 2 \tau^{\frac 18}  \lambda^{\frac 14} \xi^{\frac 32}+\mathcal O(1),\quad \tau\to\infty,
\end{gather*}
and the corresponding particular solution to system \eqref{MS} is unstable.

Thus we have
\begin{Th} Let $\sigma$ be a root of equation \eqref{teq}.
 \begin{itemize}
   \item If  $\mathcal P'(\sigma;\delta,\nu,\kappa)<0$, the solution $\rho_\ast(\tau)$, $\psi_\ast(\tau)$ with asymptotics \eqref{PAS} is unstable.
   \item If $\mathcal P'(\sigma;\delta,\nu,\kappa)=0$ and $\mp\mathcal  P''(\sigma;\delta,\nu,\kappa)>0$, the solution $\rho_\ast(\tau)$, $\psi_\ast(\tau)$ with asymptotics \eqref{PAS}, $\psi_1=\pm \phi$ is unstable.
   \item If $\mathcal P'(\sigma;\delta,\nu,\kappa)=0$, $\mathcal P''(\sigma;\delta,\nu,\kappa)=0$, and $\mathcal P'''(\sigma;\delta,\nu,\kappa)<0$, the solution $\rho_\ast(\tau)$, $\psi_\ast(\tau)$ with asymptotics \eqref{PAS3} is unstable.
     \item If $\mathcal P'(\sigma;\delta,\nu,\kappa)=0$, $\mathcal P''(\sigma;\delta,\nu,\kappa)=0$, and $\mathcal P'''(\sigma;\delta,\nu,\kappa)=0$, the solution $\rho_\ast(\tau)$, $\psi_\ast(\tau)$ with asymptotics \eqref{PAS4}, $\psi_1=-\xi$ is unstable.
  \end{itemize}
\label{ThExpUnst}
\end{Th}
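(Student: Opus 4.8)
The plan is to promote the asymptotic spectral analysis carried out just above the statement into a genuine Lyapunov-instability statement for the nonlinear system \eqref{ShiftSys}. In each of the four cases that computation shows that the eigenvalues $z_\pm(\tau)$ of ${\bf\Lambda}(\tau)$ are real for $\tau\gg 1$, separate as $z_+(\tau)\to+\infty$ and $z_-(\tau)\to-\infty$, and that the gap behaves like $z_+(\tau)-z_-(\tau)\sim 2c\,\tau^{p}$ with $c>0$ and $p\in\{1/2,1/4,1/6,1/8\}$ according to the multiplicity of $\sigma$. Since the instantaneous spectrum does not by itself decide stability of a non-autonomous system, I would first reduce ${\bf\Lambda}(\tau)$ to diagonal form modulo a negligible error, and then run a Chetaev argument on the full equations, treating the higher-order terms of \eqref{ShiftSys} as perturbations of the dominant unstable direction.

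For the diagonalization, note that the entries of ${\bf\Lambda}(\tau)$ have the block structure $\begin{pmatrix}\mathcal O(1)&\mathcal O(\tau^{1/2})\\ \mathcal O(\tau^{1/2})&\mathcal O(1)\end{pmatrix}$, the off-diagonal growth coming from $\alpha(\tau)\sim\sqrt\tau$ and $\rho_\ast(\tau)\sim\sqrt{\lambda\tau}$. Consequently the normalized eigenvectors have $\mathcal O(1)$ component ratios, the eigenvector matrix $T(\tau)$ is uniformly invertible, and $\|T^{-1}(\tau)\,\tfrac{dT}{d\tau}\|=\mathcal O(\tau^{-1})$. The substitution $(R,\Psi)^\top=T(\tau)u$ then turns the linear part into $\frac{du}{d\tau}=\big(\mathrm{diag}(z_+,z_-)+{\bf E}(\tau)\big)u$ with $\|{\bf E}(\tau)\|=\mathcal O(\tau^{-1})=o(\tau^{p})$, so the error is asymptotically negligible against the spectral gap. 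This is a Levinson-type dichotomy condition, and it is exactly what guarantees that the linear flow possesses a genuine growing direction rather than merely a pointwise-unstable eigenvalue.

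Next I would fix the Chetaev function $V(u)=u_+^2-u_-^2$ and differentiate it along the transformed system, retaining the nonlinear remainder $T^{-1}N(Tu,\tau)$ inherited from \eqref{ShiftSys}. On the cone $\{V>0\}$ intersected with a ball $\{|u|<\epsilon_0\}$ the leading contribution is $\frac{dV}{d\tau}=2z_+u_+^2-2z_-u_-^2+E_V+N_V$, where $E_V$ collects the ${\bf E}(\tau)$-terms and $N_V$ the nonlinear ones; the first two terms are positive of size $\tau^{p}|u|^2$, while $E_V=o(\tau^{p})\,|u|^2$ is harmless. For the simple root ($p=1/2$) the quadratic nonlinearity carries a coefficient of size $\mathcal O(\tau^{1/2})$, the same order as $z_+$, so for $\epsilon_0$ small enough and $\tau_\ast$ large enough one obtains $\frac{dV}{d\tau}>0$ on the cone; Chetaev's theorem then forces escape from $\{|u|<\epsilon_0\}$ of trajectories starting arbitrarily close to the origin, which is precisely the asserted instability of $\rho_\ast,\psi_\ast$.

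The hard part is the same estimate for the roots of multiplicity $2,3,4$, where $z_+\sim\tau^{1/4},\tau^{1/6},\tau^{1/8}$ grows strictly slower than the $\mathcal O(\tau^{1/2})$ coefficient of the quadratic nonlinearity (again traceable to $\rho_\ast\sim\sqrt{\lambda\tau}$). In a ball of fixed radius the naive Chetaev inequality then fails for large $\tau$, so the argument must instead exploit that the integrated growth $\int_{\tau_\ast}^{\tau}z_+(s)\,ds\sim\tau^{p+1}$ is super-polynomial: over the long but finite time required for an initially $\delta$-small unstable component to reach $\epsilon_0$, the linear expansion dominates the accumulated nonlinear effect. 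Making this quantitative, equivalently choosing a $\tau$-dependent Chetaev cone and checking that the quadratic term projected onto the unstable eigendirection cannot reverse the sign of $\frac{dV}{d\tau}$, is the delicate bookkeeping on which the proof hinges; the existence results established above together with the comparison theorems of \cite{LK14} provide the framework in which these estimates are carried out uniformly in all four cases, yielding the stated instability.
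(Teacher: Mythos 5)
Your high-level strategy---upgrade the asymptotic saddle structure of the linearization to genuine nonlinear instability via diagonalization plus a Chetaev cone---is sound in spirit, and for the first bullet (simple root, $\mathcal P'(\sigma;\delta,\nu,\kappa)<0$, spectral gap of order $\tau^{1/2}$) it does close: there the quadratic remainder of \eqref{ShiftSys} and the unstable eigenvalue are of the same order $\tau^{1/2}$, so a fixed small ball and a fixed cone suffice. For the degenerate roots, however, the proof has a genuine gap, and you flag it yourself without resolving it. The quadratic terms of \eqref{ShiftSys} carry a coefficient of order $\tau^{1/2}$ (coming from $\alpha(\tau)\sim\sqrt\tau$ and $\rho_\ast\sim\sqrt{\lambda\tau}$, and this does not degenerate with the root), while the spectral gap is only of order $\tau^{p}$ with $p\in\{1/4,1/6,1/8\}$. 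Hence $dV/d\tau>0$ fails on every fixed ball once $\tau$ is large, and neither the super-polynomial integrated growth $\int z_+(s)\,ds$ nor the comparison theorems of \cite{LK14} (which give global existence, not lower bounds on growth) repairs a pointwise inequality of the wrong sign. The missing device is the $\tau$-dependent rescaling $R=\tau^{-w_1}r$, $\Psi=\tau^{-w_2}\varphi$ with the weights appearing in Theorem~\ref{cgs} (e.g.\ $w_1=3/4$, $w_2=1/2$ for multiplicity $2$): in the rescaled variables the nonlinear remainder becomes subordinate to the $\mathcal O(\tau^{p})$ leading part, and only then can a Chetaev or dichotomy argument be run uniformly. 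Your ``$\tau$-dependent Chetaev cone'' gestures at exactly this, but it is the step on which the conclusion rests and it is not carried out.

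A second, related defect is the diagonalization lemma itself. The $(1,2)$ entry of ${\bf\Lambda}(\tau)$ equals $\alpha(\tau)\cos\psi_\ast-2\beta(\tau)\rho_\ast\cos(2\psi_\ast+\nu)=-\sqrt{\tau}\,\mathcal P'(\psi_\ast;\delta,\nu,\kappa)+\mathcal O(1)$, so in the degenerate cases ($\mathcal P'(\sigma;\delta,\nu,\kappa)=0$) it is $o(\tau^{1/2})$, while the $(2,1)$ entry is always $\sim 2\sqrt{\lambda\tau}$. The matrix is therefore unbalanced, the eigenvector component ratio $R/\Psi$ decays like $\tau^{-1/4}$, $\tau^{-1/3}$, $\tau^{-3/8}$ in the three degenerate cases (which is precisely why the paper's stability definition uses unequal weights), the normalized eigenvector matrix $T(\tau)$ tends to a singular matrix, and $T^{-1}$ amplifies the nonlinear remainder $T^{-1}N(Tu,\tau)$. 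So the claim of uniform invertibility with $\mathcal O(1)$ ratios and $\|T^{-1}dT/d\tau\|=\mathcal O(\tau^{-1})$ cannot be asserted as stated outside the simple-root case. For what it is worth, the paper itself does not perform any of these nonlinear estimates: its justification of this theorem is exactly the asymptotic eigenvalue computation you start from, plus the assertion that an asymptotic saddle is unstable with a reference to \cite{HK02}. You are attempting strictly more than the paper writes down, but as submitted the degenerate cases remain open.
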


Let us consider the following cases that are not covered by Theorem~\ref{ThExpUnst}:
\begin{description}
  \item[Case I] $\mathcal P'(\sigma;\delta,\nu,\kappa)>0$.
  \item[Case II] $\mathcal P'(\sigma;\delta,\nu,\kappa)=0$, $\pm \mathcal P''(\sigma;\delta,\nu)>0$ and $\hat\psi(\tau)=\pm \phi\tau^{-\frac 12}+\mathcal O(\tau^{-1})$ as $\tau\to\infty$.
  \item[Case III] $\mathcal P'(\sigma;\delta,\nu,\kappa)=0$, $\mathcal P''(\sigma;\delta,\nu,\kappa)=0$, and $\mathcal P'''(\sigma;\delta,\nu,\kappa)>0$.
  \item[Case IV] $\mathcal P'(\sigma;\delta,\nu,\kappa)=0$, $\mathcal P''(\sigma;\delta,\nu,\kappa)=0$, $\mathcal P'''(\sigma;\delta,\nu,\kappa)=0$  and $\hat\psi(\tau)= \xi\tau^{-\frac 14}+\mathcal O(\tau^{-\frac 12})$ as $\tau\to\infty$.
\end{description}
In these cases, the roots of the characteristic equation are complex. In particular,
\begin{gather*}
    \begin{array}{ll}
      z_\pm(\tau)=\pm i \tau^{\frac 12} (4\lambda)^{\frac 14} \sqrt{\mathcal P'(\sigma;\delta,\nu,\kappa)}+\mathcal O(1),&  \text{ in {\bf Case I}},\\
    z_\pm(\tau)=\pm i  \tau^{\frac 14}(4\lambda)^{\frac14} \sqrt{|\phi \mathcal P''(\sigma;\delta,\nu,\kappa)|}+\mathcal O(1),  &  \text{ in {\bf Case II}},\\
   z_\pm(\tau)=\pm i  \tau^{\frac 16} \lambda^{\frac14}   \sqrt{ \chi^2 \mathcal P'''(\sigma;\delta,\nu,\kappa)}+\mathcal O(1),  &  \text{ in {\bf Case III}},\\
     z_\pm(\tau)= \pm 2 i  \tau^{\frac 18} \lambda^{\frac14}   \xi^{\frac 32}+\mathcal O(1),  &  \text{ in {\bf Case IV}},
\end{array}
\end{gather*}
and $\Re z_\pm(\tau)=\mathcal O(1)$ as $\tau\to\infty$. In such cases, the linear stability analysis fails (see, for example,~\cite{HT94,OS20arxiv}), and the nonlinear terms of the equations must be taken into account.

\subsection{Lyapunov functions}

Let us specify the definition of stability that will be used in this section.
\begin{Def}
The solution $\rho_\ast(\tau)$, $\psi_\ast(\tau)$ to system \eqref{MS} is stable with the weights $\tau^{w_1}$ and $\tau^{w_2}$ as $\tau\geq \tau_0$  if for all $\epsilon>0$ there exists  $\delta_\epsilon>0$ such that for all $(\rho^0, \psi^0)$: $(\rho^0-\rho_\ast(\tau_0))^2 +(\psi^0-\psi_\ast(\tau_0))^2  <\delta_\epsilon^2$ the solution $\rho(\tau)$, $\psi(\tau)$ to system \eqref{MS} with initial data $\rho(\tau_0)=\rho^0$, $\psi(\tau_0)=\psi^0$ satisfies the following inequality:
\begin{gather}
\label{ineq}
(\rho(\tau)-\rho_\ast(\tau))^2 \tau^{2w_1}+(\psi(\tau)-\psi_\ast(\tau))^2 \tau^{2w_2} <\epsilon^2
\end{gather}
for all $\tau>\tau_0$.
\end{Def}

This definition modifies classical concept of stability with $w_1=w_2=0$. Inequality \eqref{ineq} can be considered as the estimate for the norm in the space of continuous functions with the weights $\tau^{w_1}$ and $\tau^{w_2}$. Thus, if the solution is stable with the weights $\tau^{w_1}$, $\tau^{w_2}$ and $w_1,w_2\geq 0$, then the perturbed solutions with initial data sufficiently close to $\rho_\ast(\tau)$, $\psi_\ast(\tau)$ have the following asymptotics: $\rho(\tau)=\rho_\ast(\tau)+\mathcal O(\tau^{-w_1})$, $\psi(\tau)=\psi_\ast(\tau)+\mathcal O(\tau^{-w_2})$ as $\tau\to\infty$.

\begin{Th} Let $\sigma$ be a root of equation \eqref{teq}.
\label{cgs}
 \begin{itemize}
   \item In {\bf Case I}, the solution $\rho_\ast(\tau)$, $\psi_\ast(\tau)$ with asymptotics \eqref{PAS} is stable.
   \item In {\bf Case II}, the solution $\rho_\ast(\tau)$, $\psi_\ast(\tau)$ with asymptotics \eqref{PAS} is stable with the weights $\tau^{3/4}$ and $\tau^{1/2}$.
   \item In {\bf Case III}, the solution $\rho_\ast(\tau)$, $\psi_\ast(\tau)$  with asymptotics \eqref{PAS3} is stable with the weights $\tau^{2/3}$ and $\tau^{1/3}$.
   \item In {\bf Case IV}, the solution $\rho_\ast(\tau)$, $\psi_\ast(\tau)$  with asymptotics \eqref{PAS4} is stable with the weights $\tau^{5/8}$ and $\tau^{1/4}$.
  \end{itemize}
\end{Th}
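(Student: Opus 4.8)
The plan is to construct, in each of the four cases, a Lyapunov function adapted to the asymptotic structure of the linearized problem and to control the nonlinear remainder terms of system~\eqref{ShiftSys}. Since the characteristic exponents $z_\pm(\tau)$ are purely imaginary (to leading order) with $\Re z_\pm(\tau)=\mathcal O(1)$, the linearization is marginal and the stability must be extracted from the combined effect of the weak dissipation $\gamma(\tau)$ and the nonlinear terms. First I would rescale the perturbations according to the expected weights, introducing new variables such as $u=R\,\tau^{w_1}$ and $v=\Psi\,\tau^{w_2}$ (with $(w_1,w_2)=(1/2,0)$ in Case~I and the weights prescribed by the theorem in the other cases), so that the leading oscillatory part of the rescaled system becomes time-independent after a further slow-time substitution. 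The natural candidate for the Lyapunov function is then a quadratic form in $(u,v)$ whose coefficients are tied to the energy integral of the frozen linear oscillator, for instance $V(u,v,\tau)=\tfrac12\big(a(\tau)u^2+b(\tau)uv+c(\tau)v^2\big)$ with $a,b,c$ chosen so that the leading-order term of $dV/d\tau$ along trajectories vanishes.

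The key computation is the derivative of $V$ along solutions of~\eqref{ShiftSys}. I would substitute the rescaled equations into $dV/d\tau$ and expand in powers of $\tau^{-1/2}$ (or the appropriate fractional power matching the multiplicity of $\sigma$), organizing the result as a leading conservative part plus correction terms. The design goal is that after the optimal choice of $a,b,c$, the leading term cancels identically and the next-order contribution is dominated by the dissipative term coming from $\gamma_0\sqrt\lambda=\kappa>0$, which supplies a definite sign. Concretely, I expect an estimate of the shape
\begin{gather*}
\frac{dV}{d\tau}\leq -\frac{c_0}{\tau}\,V+\frac{C}{\tau^{1+\mu}}\,V^{1/2}\big(V^{1/2}+V\big),\quad \tau\geq\tau_0,
\end{gather*}
with constants $c_0>0$, $\mu>0$ and $C>0$, valid in a sufficiently small neighbourhood of the fixed point. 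The first term reflects the stabilizing effect of the weights together with the dissipation, while the second collects the genuinely nonlinear perturbations, which are of strictly higher order in $\tau^{-1}$ because the relevant quadratic forms are positive definite for $\kappa>0$ in each admissible case.

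From such a differential inequality the stability follows by a standard comparison argument: for $V(\tau_0)$ small enough, the nonlinear term is subordinate to the linear one, so $V(\tau)$ stays bounded (indeed nonincreasing up to a controlled factor) for all $\tau>\tau_0$, which translates back into inequality~\eqref{ineq} with the stated weights. To justify the positivity and boundedness of the quadratic form I would verify that the symmetric matrix $\big(\begin{smallmatrix}a&b/2\\ b/2&c\end{smallmatrix}\big)$ has eigenvalues bounded away from $0$ and $\infty$ uniformly in $\tau\geq\tau_0$, using the explicit asymptotics of $\hat\rho(\tau)$, $\hat\psi(\tau)$ and of the entries of $\mathbf\Lambda(\tau)$ recorded above.

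The main obstacle will be the cancellation of the leading conservative term in $dV/d\tau$ while simultaneously arranging the correct sign of the next order. In the degenerate Cases~II--IV the linear frequency grows like a fractional power of $\tau$, so the naive quadratic form produces a dominant oscillatory term that does not cancel; the fix is to let $b(\tau)$ carry a matching fractional power and to track the interplay between the nonlinear coefficients (encoded in $\mathcal P''$, $\mathcal P'''$, $\mathcal P^{(4)}$ at $\sigma$) and the dissipation. Getting the bookkeeping of these fractional powers right, and confirming that the residual sign is negative precisely under the stated hypotheses (Case~I sign of $\mathcal P'$, the chosen branch $\psi_1=+\phi$ or $+\xi$ in the even-multiplicity cases, and $\mathcal P'''(\sigma)>0$ in Case~III), is the delicate part; the rest is routine expansion and comparison.
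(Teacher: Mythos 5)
Your overall skeleton (rescale by the stated weights, build a Lyapunov function whose derivative acquires a definite sign from the dissipation, close with a comparison argument) matches the paper's, and the rescalings $R=\tau^{-w_1}r$, $\Psi=\tau^{-w_2}\varphi$ in Cases II--IV are exactly the ones used there. But the central object of your construction --- a quadratic form $V=\tfrac12\big(au^2+buv+cv^2\big)$ with $a,b,c$ tuned so that ``the leading-order term of $dV/d\tau$ vanishes'' --- does not work, and this is a genuine gap rather than bookkeeping. The paper first puts \eqref{ShiftSys} into near-Hamiltonian form \eqref{ham}, $R'=-\partial_\Psi H$, $\Psi'=\partial_R H+F$, with $H=\tau^{1/2}\big(\sqrt\lambda R^2+\mathcal P'(\sigma;\delta,\nu,\kappa)\Psi^2/2+\mathcal O(d^3)\big)+\dots$ in Case I. If $V$ is only a quadratic form, the Poisson-bracket contribution of the cubic remainder of $H$ to $dV/d\tau$ is of order $\tau^{1/2}d^3$: it carries the growing factor $\tau^{1/2}$ (respectively $\tau^{1/4}$, $\tau^{1/6}$, $\tau^{1/8}$ after rescaling in the degenerate cases), has no definite sign, and for any fixed small $d$ eventually swamps the dissipative gain $-2\gamma_0 W=\mathcal O(d^2)$; no choice of $a,b,c$ cancels it. The paper's resolution is to take the Lyapunov function to be the full Hamiltonian itself, normalized and corrected by a cross term, $V_1=\tau^{-1/2}(H-\gamma_0R\Psi)$ in Case I and $V_j=\tau^{-e_j}(H_j-\gamma_0 r\varphi)$ after rescaling, so that the entire conservative part of the vector field --- including the dangerous cubic and quartic terms --- is annihilated identically in $dV/d\tau$, leaving only $\partial_\tau V$, the contribution of $F\approx-2\gamma_0\Psi$, and the small commutator of the cross term with $H$, which together yield $dV/d\tau\le-2\gamma_\varkappa V$.

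Two smaller points. Your anticipated inequality $dV/d\tau\le-c_0V/\tau+\dots$ undersells the dissipation: since $\gamma(\tau)\to\gamma_0>0$ rather than decaying like $\tau^{-1}$, the correct estimate has a constant rate, $dV/d\tau\le-2\gamma_\varkappa V$, i.e.\ exponential decay of the perturbations, and this sharper form is what the paper's Corollary relies on. Also, in Case I no rescaling is performed at all; the theorem asserts plain stability there, and the decay $R=\mathcal O(e^{-\gamma_\varkappa\tau})$ is obtained a posteriori from the Lyapunov estimate, not built into the weights.
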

\begin{proof}
Note that system \eqref{ShiftSys} can be rewritten in a near-Hamiltonian form:
\begin{gather}
    \label{ham}
    \frac{dR}{d\tau}=-\partial_\Psi H(R,\Psi,\tau), \quad
    \frac{d\Psi}{d\tau}=\partial_R H(R,\Psi,\tau)+F(R,\Psi,\tau),
\end{gather}
where
\begin{eqnarray*}
    H(R,\Psi,\tau)  & := &   \rho_\ast  R^2  + \alpha \big(\cos(\psi_\ast+\Psi)-\cos\psi_\ast+\Psi\sin\psi_\ast\big) \\
        &   & - \frac{ \beta \rho_\ast }{2} \Big( \cos(2\psi_\ast+2\Psi+\nu)-\cos(2\psi_\ast+\nu)+2\Psi \sin(2\psi_\ast+\nu)\Big)\\
        &   &   +\gamma R\Psi+ \frac{R^3}{3}-\frac{ \beta R}{2} \Big(\cos(2\psi_\ast+2\Psi+\nu)-\cos(2\psi_\ast+\nu)\Big)
\end{eqnarray*}
and
\begin{eqnarray*}
    F (R,\Psi,\tau) & := & \alpha\left( \frac{\cos(\psi_\ast+\Psi)}{\rho_\ast+R}-\frac{\cos\psi_\ast}{\rho_\ast}\right) -\frac{\beta}{2} \Big(\cos(2\psi_\ast+2\Psi+\nu)-\cos(2\psi_\ast+\nu)\Big)-\gamma \Psi.
\end{eqnarray*}
Taking into account the asymptotic formulas for the particular solutions $\rho_\ast(\tau)=\sqrt{\lambda\tau} (1+\hat \rho(\tau))$ and $\psi_\ast(\tau)=\sigma+\hat\psi(\tau)$, we obtain the following asymptotic estimates:
\begin{eqnarray*}
H & =  &\tau^{\frac 12}\Big( \sqrt\lambda R^2 +  \int\limits_0^\Psi \mathcal P(\sigma+\zeta;\delta,\nu,\kappa)
\,d\zeta \Big)   + \tau^{\frac 12} \hat\psi \Big(\mathcal P(\sigma+\Psi;\delta,\nu,\kappa) - \mathcal P'(\sigma;\delta,\nu,\kappa) \Psi  \Big) \\
&& + \tau^{\frac 12} \frac{\hat\psi^2}{2} \Big(\mathcal P'(\sigma+\Psi;\delta,\nu,\kappa) - \mathcal P''(\sigma;\delta,\nu,\kappa) \Psi  -\mathcal P'(\sigma;\delta,\nu,\kappa)  \Big)  \\
&&+ \tau^{\frac 12} \frac{\hat\psi^3}{6} \Big(\mathcal P''(\sigma+\Psi;\delta,\nu,\kappa) - \mathcal P'''(\sigma;\delta,\nu,\kappa) \Psi  -\mathcal P''(\sigma;\delta,\nu,\kappa)  \Big) \\
&& +\gamma_0 R\Psi+\frac{R^3}{3}-\frac{\beta_0 R}{2}\Big(\cos(2\sigma+2\Psi+\nu)-\cos(2\sigma+\nu)\Big)+\\
&&+ \beta_0 R \hat\psi \Big(\sin(2\sigma+2\Psi+\nu)-\sin(2\sigma+\nu)\Big)+\mathcal O(\hat\psi^2)+\mathcal O(\tau^{\frac 12}\hat\psi^4 )+\mathcal O(\tau^{-\frac 12})
\end{eqnarray*}
and
\begin{eqnarray*}
F & =  & \frac{1}{\sqrt\lambda} \int\limits_0^\Psi \Big(\mathcal P(\sigma+\zeta;\delta,\nu,\kappa)-2\kappa\Big)\,d\zeta +\mathcal O(\hat\psi)+\mathcal O(\tau^{-\frac 12})
\end{eqnarray*}
as $\tau\to\infty$, for all  $ (R,\Psi)\in B(d_\ast)$, where
\begin{gather*}
    B(d_\ast):=\{(R,\Psi)\in\mathbb R^2:  d=\sqrt{R^2+\Psi^2}\leq d_\ast\}, \quad d_\ast={\hbox{\rm const}}>0.
\end{gather*}

Consider {\bf Case I}, when $\sigma$ is the simple root to equation \eqref{teq}. In this case, $\hat\psi=\theta \tau^{-1}+\mathcal O(\tau^{-3/2})$,
and we get
\begin{gather*}
   H= \tau^{\frac 12}\Big(\sqrt\lambda R^2 +  \omega_1^2 \frac{\Psi^2}{2}+\mathcal O(d^3)\Big)+\mathcal O(\tau^{-\frac 12}d^2),\quad F= -2\gamma_0\Psi +\mathcal O(d^2)+\mathcal O(\tau^{-\frac 12}d)
\end{gather*}
as $d\to0$ and $\tau\to\infty$, where $\omega_1^2=\mathcal P'(\sigma;\delta,\nu,\kappa)>0$. Note that the asymptotic estimates are uniform with respect to $(R,\Psi,\tau)$ in the domain
$\{(R,\Psi,\tau)\in\mathbb R^3: d\leq d_\ast, \tau\geq \tau_\ast\}$,
where $d_\ast$, $\tau_\ast={\hbox{\rm const}}>0$.
Consider the combination
\begin{gather*}
    V_1(R,\Psi,\tau):= \tau^{-\frac 12}\Big( H(R,\Psi,\tau)-\gamma_0 R\Psi\Big)
\end{gather*}
as a Lyapunov function candidate for system \eqref{ham}.
The function $V_1(R,\Psi,\tau)$ has the following asymptotics:
\begin{gather*}
V_1 =  \sqrt\lambda R^2 +  \omega_1^2 \frac{\Psi^2}{2}+\mathcal O(d^3)+\mathcal O(\tau^{-\frac 12}d^2)
\end{gather*}
as $d\to 0$ and $\tau\to\infty$.
It can easily be checked that for all $0<\varkappa <1$ there exist $d_1>0$ and $\tau_{11}>0$ such that
\begin{gather*}
 (1-\varkappa ) W_1(R,\Psi)  \leq V_1(R,\Psi,\tau) \leq  (1+\varkappa ) W_1(R,\Psi)
\end{gather*}
for all $(R,\Psi,\tau)\in D_{W_1}(d_1,\tau_{11})$, where
\begin{gather*}
D_{W_1}(d_1,\tau_{11}) :=  \{(R,\Psi,\tau)\in\mathbb R^3: W_1(R,\Psi)\leq d_1^2, \tau\geq \tau_{11}\}, \quad W_1(R,\Psi) := \sqrt \lambda R^2+\omega_1^2\frac{\Psi^2}{2}.
\end{gather*}
The total derivative of the function $V_1(R,\Psi,\tau)$ with respect to $\tau$ along the trajectories of system \eqref{ham} has the form:
\begin{gather*}
\frac{dV_1}{d\tau}\Big|_\eqref{ham} =\frac{\partial V_1}{\partial \tau}-\frac{\partial V_1}{\partial R}\partial_\Psi H+ \frac{\partial V_1}{\partial \Psi} \Big(\partial_R H+F\Big)= -  2 \gamma_0 \Big( \sqrt \lambda R^2 +\omega_1^2 \frac{\Psi^2}{2} +\mathcal O(d^3)\Big)+\mathcal O(\tau^{-\frac 12}d^2)
\end{gather*}
as $d\to 0$ and $ \tau\to\infty$.
Hence, for all $0<\varkappa <1$ there exist $d_2>0$ and $\tau_{12}>0$ such that
\begin{gather}
\label{V1est}
    \frac{dV_1}{d\tau}\Big|_\eqref{ham} \leq -  2\gamma_\varkappa  V_1, \quad \gamma_\varkappa :=\gamma_0\Big(\frac{1-\varkappa }{1+\varkappa }\Big)>0,
\end{gather}
 for all $(R,\Psi,\tau)\in  D_{W_1}(d_2,\tau_{12})$. Thus, for all $0<\varepsilon< d_0$ there exist $\delta_\varepsilon := \varepsilon \sqrt{(1-\varkappa)/(1+\varkappa)}/2$  such that
\begin{gather*}
\sup_{W_1\leq \delta_\varepsilon^2} V_1(R,\Psi,\tau)\leq (1+\varkappa )  \delta^2_\varepsilon< (1-\varkappa )   \varepsilon^2 \leq \inf_{W_1=\varepsilon^2}V_1(R,\Psi,\tau)
\end{gather*}
for all $\tau>\tau_0$, where $d_0=\min\{d_1,d_2\}$ and $\tau_0=\max\{\tau_{11},\tau_{12}\}$. The last estimates and the negativity of the total derivative of the function $V_1(R,\Psi,\tau)$ ensure that any solution of system \eqref{ham} with initial data $ W_1(R(\tau_0), \Psi(\tau_0))\leq \delta_\varepsilon^2 $ cannot leave the domain $\{(R,\Psi)\in\mathbb R^2: W_1(R,\Psi)\leq \varepsilon^2\}$ as  $\tau>\tau_0$. Hence, the fixed point $(0,0)$ is stable as $\tau>\tau_0$. The stability on the finite time interval $(0,\tau_0]$ follows from the theorem on the continuity of the solutions to the Cauchy problem with respect to the initial data.

Consider {\bf Case II}.
Let $\sigma$ be a root of multiplicity 2 to equation \eqref{teq} such that $\mathcal P'(\sigma;\delta,\nu,\kappa)=0$, $\pm\mathcal P''(\sigma;\delta,\nu,\kappa)\hat\psi=\omega_2^2 \tau^{-1/2}+\mathcal O(\tau^{-1})$, where $\omega_2^2=\phi \mathcal P''(\sigma;\delta,\nu,\kappa)>0$.
Hence,
\begin{gather*}
H=\tau^{\frac 12} \Big(\sqrt \lambda  R^2+\mathcal P''(\sigma;\delta,\nu,\kappa)\frac{\Psi^3}{6}+\mathcal O(d^4)\Big) +\omega_2^2\frac{\Psi^2}{2}  + \mathcal O(d^3)+\mathcal O(\tau^{-\frac 12}d^2)
\end{gather*}
as $d\to 0$ and $\tau\to\infty$. In this case, $H(R,\Psi,\tau)$ is sign indefinite and  the function $V_1(R,\Psi,\tau)$  can not be used as a Lyapunov function.
Consider the change of variables
\begin{gather}
\label{chvar2}
 R(\tau)=\tau^{-\frac 34} r(\tau), \quad \Psi(\tau)= \tau^{-\frac 12}\varphi(\tau)
\end{gather}
in system \eqref{ham}. The transformed system is
\begin{gather}
    \label{ham1}
    \frac{d r}{d\tau}=-\partial_{\varphi} H_{2}(r,\varphi,\tau), \quad
    \frac{d \varphi}{d\tau}=\partial_{r} H_{2}(r,\varphi,\tau)+  F_{2}(r,\varphi,\tau),
\end{gather}
where
\begin{gather*}
    H_{2}(r,\varphi,\tau):=\tau^{\frac 54}  H ( \tau^{-\frac 34}r,  \tau^{-\frac 12} \varphi, \tau) -\tau^{-1}\frac{3 r\varphi}{4}, \quad
    F_{2}(r,\varphi,\tau):=\tau^{\frac 12}F (\tau^{-\frac 34}r, \tau^{-\frac 12}\varphi,\tau) + \tau^{-1}\frac{5\varphi}{4}.
 \end{gather*}
Taking into account \eqref{PAS}, we see that
\begin{gather*}
H_2 =\tau^{\frac 14}\Big(\sqrt \lambda R^2+\omega_2^2 \frac{\varphi^2}{2}+\mathcal P''(\sigma;\delta,\nu,\kappa)\frac{\varphi^3}{6}\Big)+\mathcal O(\Delta^2), \quad
F_2=-2\gamma_0 \varphi+\mathcal O(\Delta \tau^{-\frac 12})
\end{gather*}
as $t \to \infty$ and $ \Delta=\sqrt{r^2 +  \varphi^2}\to 0$. Note that the function $H_2(r,\varphi,\tau)$ is suitable for the basis of a Lyapunov function candidate:
\begin{gather*}
V_2(r,\varphi,\tau) = \tau^{-\frac 14}\Big(H_2(r,\varphi,\tau)-\gamma_0 r\varphi\Big)
\end{gather*}
It follows easily that for all $0<\varkappa <1$ there exist $ \Delta_1>0$ and $ \tau_1>0$ such that
\begin{gather*}
   (1-\varkappa )  W_2(r,\varphi) \leq V_2(r,\varphi,\tau) \leq   (1+\varkappa )  W_2(r,\varphi)
\end{gather*}
for all $(r,\varphi,\tau)\in  D_{W_2}(\Delta_1,\tau_1)$, where $
 W_2(r,\varphi):= \sqrt\lambda r^2+\omega_2^2{\varphi^2}/{2}$.
The total derivative of the function $V_2(r,\varphi,\tau)$ has a sign definite leading term of the asymptotics:
\begin{align*}
\begin{split}
    \frac{d V_2}{d\tau}\Big|_{\eqref{ham1}} & =  \frac{\partial  V_2}{\partial \tau} +  \frac{\partial V_{2}}{\partial \varphi} F_{2}+\gamma_0\tau^{-\frac 14}\Big(\varphi \frac{\partial  H_{2}}{\partial \varphi} - r\frac{\partial H_{2}}{\partial r}\Big)\\
& =    -2\gamma_0\Big(\sqrt \lambda r^2+\omega_2^2\frac{\varphi^2}{2}+ \mathcal O(\Delta^3)\Big)+\mathcal O(\tau^{-\frac 14} \Delta^2)
\end{split}
\end{align*}
as $\Delta\to 0$ and $\tau\to\infty$. Hence, for all $0<\varkappa <1$ there exist $\Delta_2>0$ and $\tau_2>0$ such that
\begin{gather}
    \label{V2est}
    \frac{dV_2}{d\tau}\Big|_\eqref{ham1} \leq -2\gamma_\varkappa  V_2 \leq 0
\end{gather}
for all $(r,\varphi,\tau)\in  D_{W_2}(\Delta_2,\tau_2)$. As above, the last inequality implies the stability of the equilibrium $(0,0)$ to system \eqref{ham1}. Returning to the original variables, we obtain the result of the Theorem.

In {\bf Case III}, $\sigma$ is a root of multiplicity 3 to equation \eqref{teq} such that $\mathcal P'(\sigma;\delta,\nu,\kappa)=\mathcal P''(\sigma;\delta,\nu,\kappa)=0$ and $\mathcal P'''(\sigma;\delta,\nu,\kappa)>0$.
The solution $\rho_\ast(\tau)$, $\psi_\ast(\tau)$ has the asymptotics \eqref{PAS3}, and $\hat\psi=\chi \tau^{-1/3}+\mathcal O(\tau^{-2/3})$.
In this case,
\begin{eqnarray*}
H(R,\Psi,\tau)&=&\tau^{\frac 12} \Big(\sqrt \lambda R^2+\mathcal P'''(\sigma;\delta,\nu,\kappa)\frac{\Psi^4}{24} +\mathcal O(d^5)\Big)+ \mathcal O(\tau^{\frac 16}  d^3) +\mathcal O(d^2)
\end{eqnarray*}
as $d\to 0$ and $\tau\to\infty$. Note that this function is sign indefinite in a neighborhood of the equilibrium and can not be used in the construction of a Lyapunov function. Indeed, if $\Psi\sim \epsilon^{1/4}$ and $\tau\sim \epsilon^{-1}$ as $\epsilon\to 0$, the leading and the remainder terms in the last expression can be of the same order. Consider the change of variables:
\begin{gather*}
 R(\tau)=\tau^{-\frac 23} r(\tau), \quad \Psi(\tau)= \tau^{-\frac 13} \varphi(\tau)
\end{gather*}
in system \eqref{ham}. It can easily be checked that the transformed system has the form
\begin{gather}
    \label{ham2}
    \frac{d r}{d\tau}=-\partial_{\varphi} H_3(r,\varphi,\tau), \quad
    \frac{d \varphi}{d\tau}=\partial_{r}  H_3(r,\varphi,\tau)+ F_3(r,\varphi,\tau),
\end{gather}
where
\begin{gather*}
H_3(r,\varphi,\tau):=\tau H ( \tau^{-\frac 23}r,  \tau^{-\frac 13}\varphi, \tau )-\tau^{-1}\frac{2r\varphi}{3}, \quad
 F_3(r,\varphi,\tau):=\tau^{\frac 13}  F ( \tau^{-\frac 23}r,  \tau^{-\frac 13}\varphi, \tau )+\tau^{-1}\varphi.
 \end{gather*}
Using asymptotic formulas for the particular solution, we obtain
 \begin{gather*}
 H_3 =   \tau^{\frac16}\Big(\sqrt \lambda r^2 +  \omega_3^2 \frac{\varphi^2}{2}-\mathcal P'''(\sigma;\delta,\nu,\kappa)\frac{\varphi^3}{6}\big(\chi-\frac{\varphi}{4}\big)\Big)+\mathcal O(\Delta^2),  \quad
F_3  =  -2\gamma_0 \varphi+\mathcal O(\Delta \tau^{-\frac 13})
\end{gather*}
as $\Delta\to 0$ and $\tau\to\infty$, where $\omega_3^2:=\chi^2 \mathcal P'''(\sigma;\delta,\nu)/2>0$.
Consider the combination
\begin{gather*}
V_3(r,\varphi,\tau) = \tau^{-\frac 16}\Big(H_3(r,\varphi,\tau)-\gamma_0 r \varphi\Big)
\end{gather*}
as a Lyapunov function candidate for system \eqref{ham2}. It follows easily that for all $0<\varkappa <1$ there exist $ \Delta_3>0$ and $ \tau_3>0$ such that
\begin{gather*}
   (1-\varkappa )  W_3(r,\varphi) \leq V_3(r,\varphi,\tau) \leq   (1+\varkappa )  W_3(r,\varphi)
\end{gather*}
for all $(r,\varphi,\tau)\in  D_{W_3}({\Delta_3,\tau_3})$, where
$W_3(r,\varphi) := \sqrt{\lambda}r^2+\omega_3^2 \varphi^2/2$.
The derivative of this function with respect to $\tau$ along the trajectories of system \eqref{ham2} satisfies:
\begin{eqnarray*}
\frac{d V_3}{d\tau}\Big|_{\eqref{ham2}} & = &   -2\gamma_0 \Big(\sqrt \lambda r^2+\omega_3^2\frac{\varphi^2}{2}+ \mathcal O(\Delta^3)\Big)+\mathcal O(\tau^{-\frac 16} \Delta^2)
\end{eqnarray*}
as $\Delta\to 0$ and $\tau\to \infty$. Hence, for all $0<\varkappa <1$ there exist $\Delta_4>0$ and $\tau_4>0$ such that
\begin{gather}
\label{V3est}
    \frac{d V_3}{d\tau}\Big|_\eqref{ham2} \leq - 2 \gamma_\varkappa  V_3\leq 0
\end{gather}
for all $(r,\varphi,\tau)\in  D_{W_3}(\Delta_4,\tau_4)$. The last inequality implies that the fixed point $(0,0)$ of system \eqref{ham2} is stable and the solution $\rho_\ast(\tau)$, $\psi_\ast(\tau)$ is stable with the weights $\tau^{2/3}$ and $\tau^{1/3}$.

Finally, consider {\bf Case IV}. The change of variables
\begin{gather*}
 R(\tau)=\tau^{-\frac 58} r(\tau), \quad \Psi(\tau)= \tau^{-\frac 14} \varphi(\tau)
\end{gather*}
transforms system \eqref{ham} into
\begin{gather}
    \label{ham3}
    \frac{d r}{d\tau}=-\partial_{\varphi} H_4(r,\varphi,\tau), \quad
    \frac{d\varphi}{d\tau}=\partial_{r} H_4(r,\varphi,\tau)+F_4(r,\varphi,\tau),
\end{gather}
where
\begin{gather*}
     H_{4}(r,\varphi,\tau):=\tau^{\frac 78} H(\tau^{-\frac 58}r,\tau^{-\frac 14}\varphi,\tau) -\tau^{-1}\frac{5r\varphi}{8}, \quad
     F_{4}(r,\varphi,\tau):=\tau^{-1}\frac{7 \varphi }{8} +\tau^{\frac 14}
     F(\tau^{-\frac 58}r,\tau^{-\frac 14}\varphi,\tau).
 \end{gather*}
Taking into account \eqref{PAS4}, it can easily be checked that
\begin{gather*}
    H_{4}(r,\varphi,\tau) = \tau^{\frac 18}\Big(\sqrt\lambda R^2+\omega_4^2 \frac{\varphi^2}{2}+\frac{\xi^2\varphi^3}{12}+\frac{\xi\varphi^4}{8}+\frac{\varphi^5}{40}\Big)+\mathcal O(\Delta^2), \quad
    F_{4}(r,\varphi,\tau) = -2\gamma_0 \varphi+\mathcal O(\Delta \tau^{-\frac 18})
\end{gather*}
as $\Delta\to 0$ and $\tau\to \infty$, where $\omega_4^2:=\xi^3/4>0$. Consider
\begin{gather*}
V_4(r,\varphi,\tau)=\tau^{-\frac 18} \Big(H_4(r,\varphi,\tau)-\gamma_0 r\varphi\Big)
\end{gather*}
as a Lyapunov function candidate to system \eqref{ham3}.
We see that  for all $0<\varkappa <1$ there exist $ \Delta_5>0$ and $ \tau_5>0$ such that
\begin{gather*}
   (1-\varkappa )  W_4(r,\varphi) \leq V_4(r,\varphi,\tau) \leq   (1+\varkappa )  W_4(r,\varphi)
\end{gather*}
for all $(r,\varphi,\tau)\in  D_{W_4}({\Delta_5,\tau_5})$, where
$W_4(r,\varphi) := \sqrt{\lambda}r^2+\omega_4^2 \varphi^2/2$.
The total derivative of $V_4$ with respect to $\tau$ satisfies:
\begin{eqnarray*}
\frac{d V_4}{d\tau}\Big|_{\eqref{ham3}} & = &   -2\gamma_0 \Big(\sqrt \lambda r^2+\omega_4^2\frac{\varphi^2}{2}+ \mathcal O(\Delta^3)\Big)+\mathcal O(\tau^{-\frac 18} \Delta^2)
\end{eqnarray*}
as $\Delta\to 0$ and $\tau\to \infty$. Hence, for all $0<\varkappa <1$ there exist $\Delta_6>0$ and $\tau_6>0$ such that
\begin{gather*}
    \frac{d V_4}{d\tau}\Big|_\eqref{ham3} \leq - 2 \gamma_\varkappa  V_4\leq 0
\end{gather*}
for all $(r,\varphi,\tau)\in  D_{W_4}(\Delta_6,\tau_6)$. The last inequality implies that the fixed point $(0,0)$ of system \eqref{ham2} is stable. Returning to the variables $(\rho,\psi)$ we obtain the result of the theorem.

\end{proof}

Thus, the particular autoresonant solutions $\rho_\ast(\tau)$, $\psi_\ast(\tau)$ with power-law asymptotics are stable in all four cases. The stability ensures the existence of a family of solutions with a similar behaviour. In particular, we have the following.

\begin{Cor}
There exist $\Delta_\ast>0$ and $T_\ast>0$ such that for all $(\rho^0,\psi^0)$: $(\rho^0-\rho_\ast(T_\ast))^2+(\psi^0-\psi_\ast(T_\ast))^2<\Delta_\ast^2$ the solution $\rho(\tau)$, $\psi(\tau)$ to system \eqref{MS} with initial data $\rho(T_\ast)=\rho^0$, $\psi(T_\ast)=\psi^0$ has the following estimates as $\tau\to\infty${\rm :}
\begin{align}
\label{as1}& \rho=\sqrt{\lambda\tau}+\mathcal O(\tau^{-\frac 12}), \ \  \psi=\sigma+\mathcal O(\tau^{-1}) \quad \text{in {\bf Case I}};  \\
\label{as2} & \rho=\sqrt{\lambda\tau}+\mathcal O(\tau^{-\frac 12}),  \ \  \psi=\sigma+\mathcal O(\tau^{-\frac 12}) \quad \text{in {\bf Case II}}; \\
\label{as3} & \rho=\sqrt{\lambda\tau}+\mathcal O(\tau^{-\frac 12}),  \ \  \psi=\sigma+\mathcal O(\tau^{-\frac 13}) \quad \text{in {\bf Case III}};
\\
\label{as4} & \rho=\sqrt{\lambda\tau}+\mathcal O(\tau^{-\frac 12}),  \ \  \psi=\frac{\pi}{2}+\mathcal O(\tau^{-\frac 14}) \quad \text{in {\bf Case IV}}.
\end{align}
\end{Cor}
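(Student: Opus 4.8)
The plan is to obtain the estimates \eqref{as1}--\eqref{as4} as a direct consequence of the Lyapunov constructions in the proof of Theorem~\ref{cgs}, together with the power-law asymptotics of the particular solutions $\rho_\ast(\tau)$, $\psi_\ast(\tau)$ established in Section~2. The decisive observation is that the inequalities \eqref{V1est}, \eqref{V2est}, \eqref{V3est} and their \textbf{Case~IV} analogue are not merely sign conditions guaranteeing non-escape: each has the form $dV_i/d\tau\le -2\gamma_\varkappa V_i$ with a positive constant $\gamma_\varkappa$, and therefore encodes genuine exponential decay of the corresponding Lyapunov function. I will use this decay to show that the deviations $\rho-\rho_\ast$ and $\psi-\psi_\ast$ are negligible against the tails of $\rho_\ast,\psi_\ast$, so that the sought estimates can be read off from the latter.

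First I would fix $T_\ast$ to be the starting time furnished by the proof of Theorem~\ref{cgs} in the relevant case, and choose $\Delta_\ast>0$ so small that every admissible initial datum, after the rescaling $R=\tau^{-w_1}r$, $\Psi=\tau^{-w_2}\varphi$ of \eqref{chvar2} and its \textbf{Case~III} and \textbf{Case~IV} counterparts (with $(w_1,w_2)=(3/4,1/2),(2/3,1/3),(5/8,1/4)$ respectively, and the identity $(w_1,w_2)=(0,0)$ in \textbf{Case~I}), lands inside a small sublevel set of $W_i$ contained in the validity domain $D_{W_i}$. Since at the fixed instant $\tau=T_\ast$ these rescalings amount to multiplication by the constants $T_\ast^{-w_1}$, $T_\ast^{-w_2}$, a sufficiently small Euclidean ball in $(\rho^0,\psi^0)$ is mapped into a small $W_i$-ball. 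The stability already proved in Theorem~\ref{cgs} then guarantees that the trajectory never leaves this sublevel set, so the differential inequality for $V_i$ remains in force for all $\tau>T_\ast$.

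Next I would integrate this inequality to get $V_i(\tau)\le V_i(T_\ast)\,e^{-2\gamma_\varkappa(\tau-T_\ast)}$, and invoke the two-sided bound $(1-\varkappa)W_i\le V_i\le(1+\varkappa)W_i$ to conclude that $W_i(\tau)\to0$ exponentially. Because $W_i$ is a positive-definite quadratic form in the rescaled deviations, this forces $(r,\varphi)$---and in \textbf{Case~I} directly $(R,\Psi)$---to vanish exponentially in $\tau$. Undoing the substitution, I obtain that $R(\tau)=\rho(\tau)-\rho_\ast(\tau)$ and $\Psi(\tau)=\psi(\tau)-\psi_\ast(\tau)$ tend to zero faster than any power of $\tau$; in particular they are subdominant to the power-law tails of $\rho_\ast$ and $\psi_\ast$.

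Finally I would add back the particular solution. Since $\rho_0=0$ in each of \eqref{PAS}, \eqref{PAS3}, \eqref{PAS4}, we have $\rho_\ast(\tau)=\sqrt{\lambda\tau}+\mathcal O(\tau^{-1/2})$ in all four cases, whereas the phase admits $\psi_\ast(\tau)=\sigma+\mathcal O(\tau^{-1})$ for a simple root, $\sigma+\mathcal O(\tau^{-1/2})$ for a double root, $\sigma+\mathcal O(\tau^{-1/3})$ for a triple root, and $\pi/2+\mathcal O(\tau^{-1/4})$ for the quadruple root (where $\sigma=\pi/2$). Writing $\rho=\rho_\ast+R$ and $\psi=\psi_\ast+\Psi$ and retaining the slower of the two error contributions in each coordinate yields precisely \eqref{as1}--\eqref{as4}; the amplitude error is always governed by $\rho_\ast$. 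The only step that is not pure bookkeeping is the persistence of the Lyapunov inequality over the whole half-line $\tau>T_\ast$, and this is exactly the non-escape already contained in Theorem~\ref{cgs}, so I expect no essentially new difficulty: the statement is a corollary-level reading of the stability theorem combined with the already-known expansions.
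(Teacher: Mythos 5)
Your proposal is correct and takes essentially the same route as the paper's own proof: integrate the differential inequality $dV_i/d\tau\le -2\gamma_\varkappa V_i$ guaranteed by Theorem~\ref{cgs}, use the two-sided comparison with $W_i$ to get exponential decay of the (rescaled) deviations, undo the weight substitutions, and superpose the power-law asymptotics of $\rho_\ast$, $\psi_\ast$. The paper merely records the slightly more explicit intermediate bounds $R=\mathcal O(\tau^{-w_1}e^{-\gamma_\varkappa\tau})$, $\Psi=\mathcal O(\tau^{-w_2}e^{-\gamma_\varkappa\tau})$ before concluding, so there is no substantive difference.
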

\begin{proof}
Consider {\bf Case I}. Let $R(\tau)$, $\Psi(\tau)$  be a solution to system \eqref{ham} starting from the ball $B(\Delta_\ast)$ at $\tau=T_\ast$, where $\Delta_\ast=d_0$ and $T_\ast=\tau_0$ (see Theorem~\ref{cgs}). Then it follows from \eqref{V1est} that the function $v(\tau)=V_1(R(\tau),\Psi(\tau),\tau)$ satisfies the inequality:
\begin{gather}\label{vest}
        \frac{d v}{d\tau}\leq -2 \gamma_\varkappa v
\end{gather}
as $\tau\geq T_\ast$, where $\gamma_\varkappa=\gamma_0(1-\varkappa)/(1+\varkappa)$, $\varkappa\in (0,1)$.
Integrating the last expression with respect to $\tau$, we obtain $0\leq v(\tau)\leq v(T_\ast) \exp\big(-2\gamma_\varkappa(\tau-T_\ast)\big)$, where $0\leq v(T_\ast) \leq C_\ast\Delta_\ast^2$, $C_\ast={\hbox{\rm const}}$. Thus we have $R(\tau)=\mathcal O(e^{-\gamma_\varkappa \tau})$, $\Psi(\tau)=\mathcal O(e^{-\gamma_\varkappa \tau})$ as $\tau\to\infty$.  Since $\rho(\tau)=\rho_\ast(\tau)+R(\tau)$, $\psi(\tau)=\psi_\ast(\tau)+\Psi(\tau)$ and $\rho_\ast(\tau)=\sqrt{\lambda\tau}+\mathcal O(\tau^{-1/2})$, $\psi_\ast(\tau)=\sigma+\mathcal O(\tau^{-1})$ as $\tau\to\infty$, we have \eqref{as1}.

{\bf Case II}. Let $r(\tau)$, $\varphi(\tau)$ be a solution to system \eqref{ham1} starting from $\{(r_0,\varphi_0): W_2(r_0,\varphi_0)\leq \Delta_\ast^2\}$ at $\tau=T_\ast$, where $\Delta_\ast=\min\{\Delta_1,\Delta_2\}$, $T_\ast=\max\{\tau_1,\tau_2\}$. From \eqref{V2est} it follows that the derivative of the function $v(\tau)=V_2\big(r (\tau),\varphi (\tau),\tau\big)$ satisfies \eqref{vest} as $\tau\geq T_\ast$.
By integrating this estimate with respect to $\tau$ and taking into account \eqref{chvar2}, we get $R(\tau)=\mathcal O(\tau^{-3/4}e^{-\gamma_\varkappa \tau})$, $\Psi (\tau)=\mathcal O(\tau^{-1/2}e^{-\gamma_\varkappa \tau})$ as $\tau\to\infty$.  Returning to the variables $(\rho,\psi)$ and using \eqref{PAS}, we derive \eqref{as2}.

{\bf Case III}. Let $r (\tau)$, $\varphi (\tau)$ be a solution to system \eqref{ham2} with initial data from  $\{(r_0,\varphi_0): W_3(r_0,\varphi_0)\leq \Delta_\ast^2\}$ at $\tau=T_\ast$, where $\Delta_\ast=\min\{\Delta_3,\Delta_4\}$, $T_\ast=\max\{\tau_3,\tau_4\}$. From \eqref{V3est} it follows that the function $v(\tau)=V_3\big(r (\tau),\varphi (\tau),\tau\big)$ satisfies  \eqref{vest} as $\tau\geq T_\ast$. Hence, $R(\tau)=\mathcal O(\tau^{-2/3}e^{-\gamma_\varkappa \tau})$, $\Psi (\tau)=\mathcal O(\tau^{-1/3}e^{-\gamma_\varkappa \tau})$ as $\tau\to\infty$.  Combining this with \eqref{PAS3}, we get \eqref{as3}.

Finally, consider {\bf Case IV}. Let $r (\tau)$, $\varphi (\tau)$ be a solution to system \eqref{ham3} with initial data from  $\{(r_0,\varphi_0): W_4(r_0,\varphi_0)\leq \Delta_\ast^2\}$ at $\tau=T_\ast$, where $\Delta_\ast=\min\{\Delta_5,\Delta_6\}$, $T_\ast=\max\{\tau_5,\tau_6\}$. It follows easily that the function $v(\tau)=V_4\big(r (\tau),\varphi (\tau),\tau\big)$ satisfies  \eqref{vest} as $\tau\geq T_\ast$. Hence, $R(\tau)=\mathcal O(\tau^{-5/8}e^{-\gamma_\varkappa \tau})$, $\Psi (\tau)=\mathcal O(\tau^{-1/4}e^{-\gamma_\varkappa \tau})$ as $\tau\to\infty$.  Combining this with \eqref{PAS4}, we get \eqref{as4}.
\end{proof}

\section{Conclusion}

Thus, we have described possible autoresonant modes in oscillating systems with a combined excitation and a weak dissipation. The presence of dissipation in the model leads to exponential stability of a part of autoresonant modes in comparison with a systems without dissipation, where only polynomial stability takes place~\cite{OS18}.
Depending on the values of the excitation parameters and the dissipation coefficient, the system can have different number of autoresonant modes with different phase mismatch:  $\psi(\tau)\sim \sigma$ as $\tau\to\infty$, where $\sigma$ is the root of equation \eqref{teq}. For every $\kappa=\gamma_0\sqrt\lambda >0$ the curves $s_\pm$ and $s_0$ consist of bifurcations points on the plane $(\delta,\nu)$, where $\delta=\beta_0 \sqrt\lambda$.  Outside of this curves the roots of equation \eqref{teq} are simple and system \eqref{MS} can have two or four different autoresonant modes.
Their stability depends on the sign of the value $\mathcal P'(\sigma;\delta,\nu,\kappa)$ (see the shaded areas in Fig.~\ref{fig33}).
\begin{figure}
\centering
\subfigure[$\kappa=0.4$, $\displaystyle \nu=\frac{\pi}{2}$]{\includegraphics[width=0.3\linewidth]{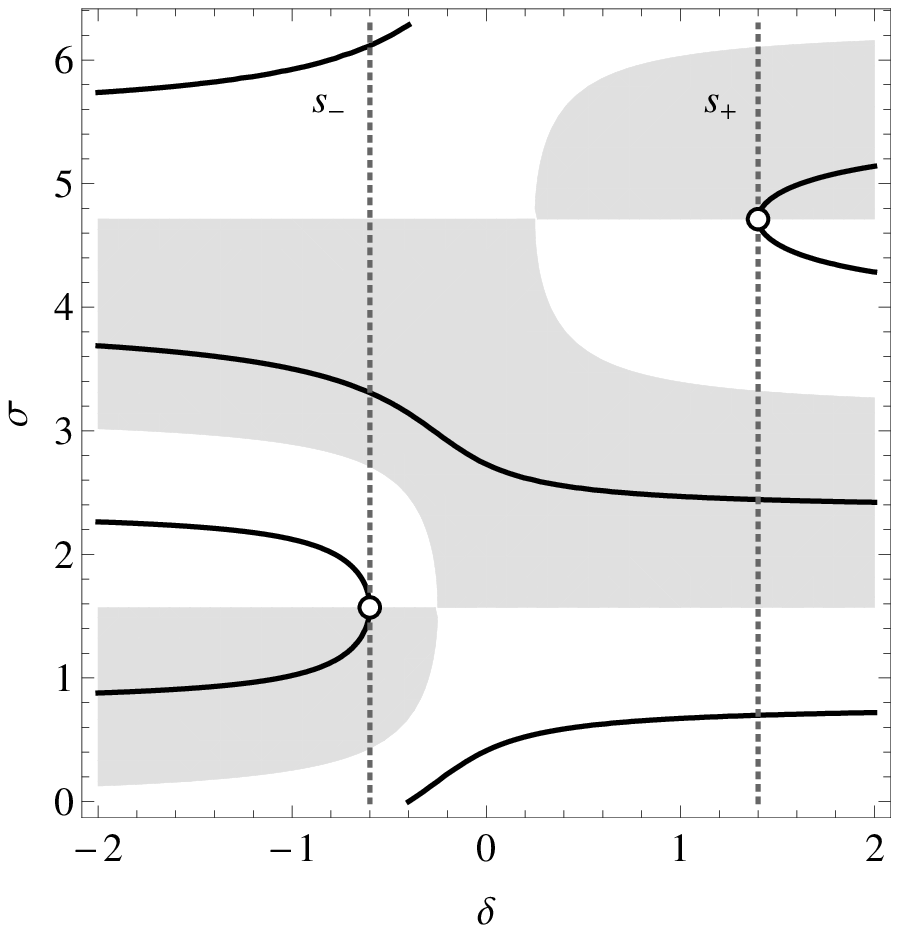}}
\hspace{2ex}
\subfigure[$\kappa=0.9$, $\displaystyle \nu=\frac{\pi}{2}$]{\includegraphics[width=0.3\linewidth]{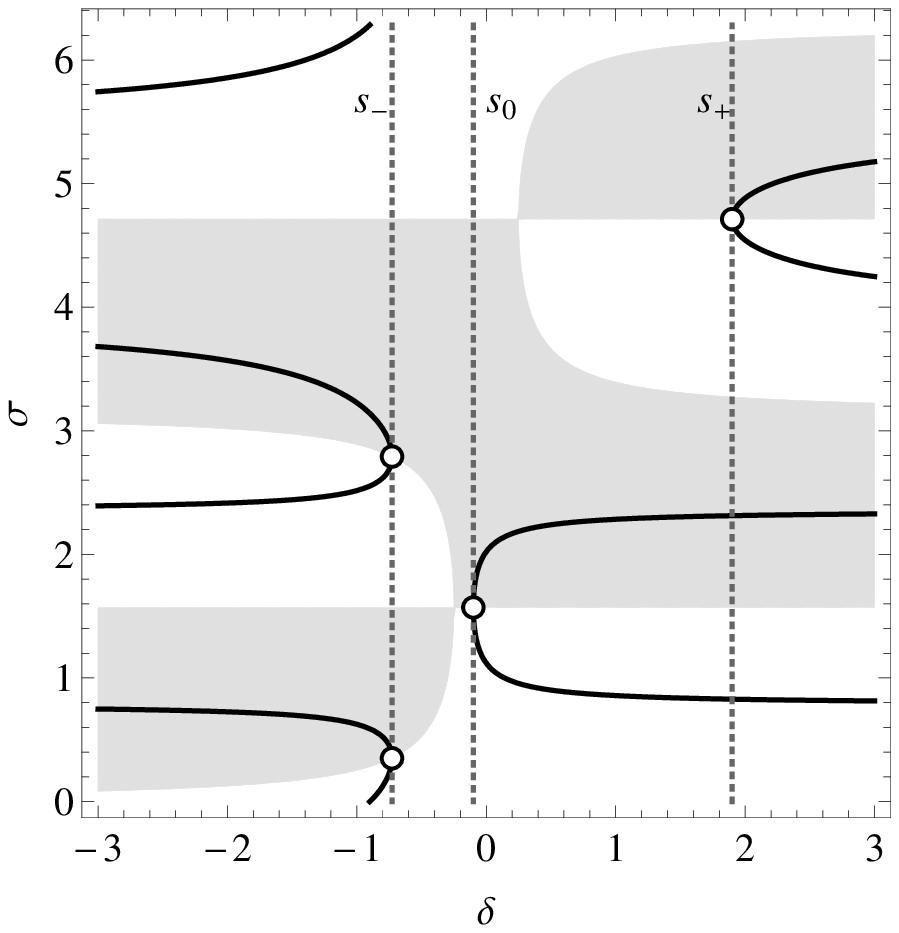}}
\hspace{2ex}
\subfigure[$\kappa=1.6$, $\displaystyle \nu=\frac{\pi}{4}$]{\includegraphics[width=0.3\linewidth]{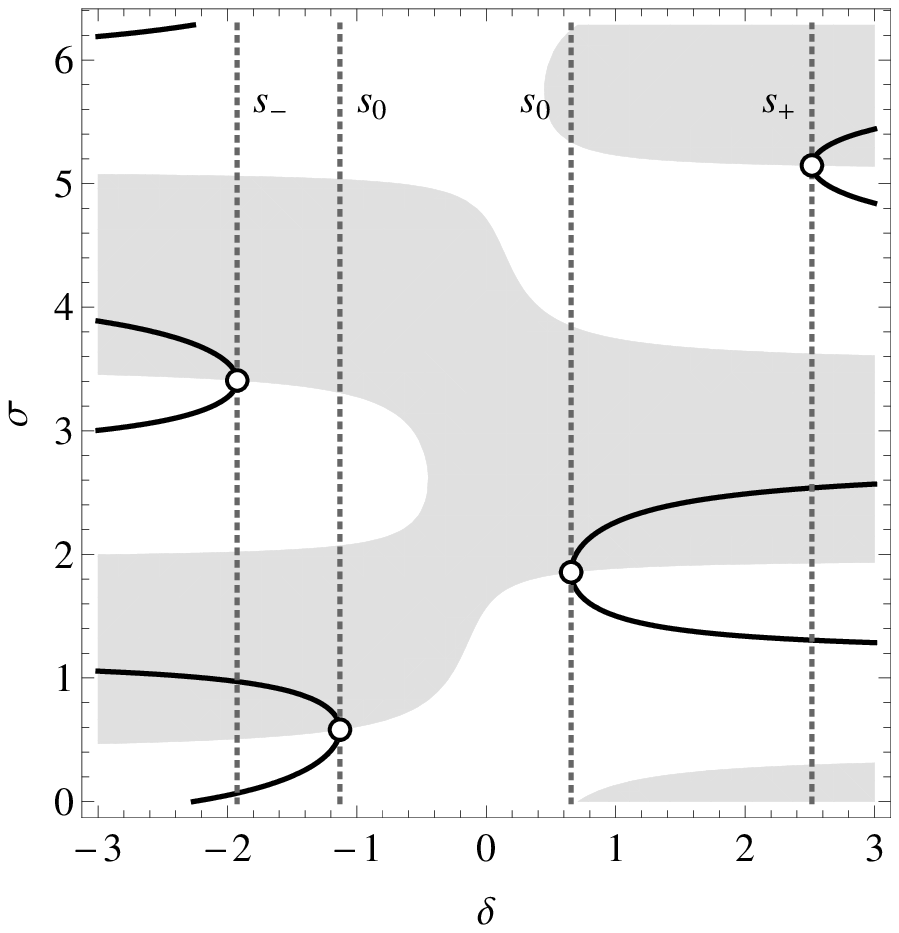}}
\caption{\small The roots to equation \eqref{teq} as functions of the parameter $\delta$ (black solid lines). The vertical dotted lines correspond to $s_-$, $s_0$ and $ s_+$. The shaded areas correspond to $\mathcal P'(\sigma;\delta,\nu,\kappa)>0$, where the particular solutions to system \eqref{MS} with asymptotics \eqref{PAS} are stable.} \label{fig33}
\end{figure}

Some of the autoresonant modes coalesce, when the parameters $(\delta,\nu)$ pass through the bifurcation curves. The following cases are possible. (I) Equation \eqref{teq} has three different roots: two simple roots and one root of multiplicity 2 (see Fig.~\ref{fig34}, a). (II) Equation \eqref{teq} has two different roots: one simple root and one root of multiplicity 3 (see Fig.~\ref{fig35}, a), or two  roots of multiplicity 2 (see Fig.~\ref{fig34}, b). (III) Equation \eqref{teq} has only one root: a root of multiplicity 2 (see Fig.~\ref{fig34}, c), or a root of multiplicity 4 (see Fig.~\ref{fig35}, b).

\begin{figure}
\centering
\subfigure[$\kappa=0.4$, $\displaystyle \nu=\frac{\pi}{2}$]{\includegraphics[width=0.3\linewidth]{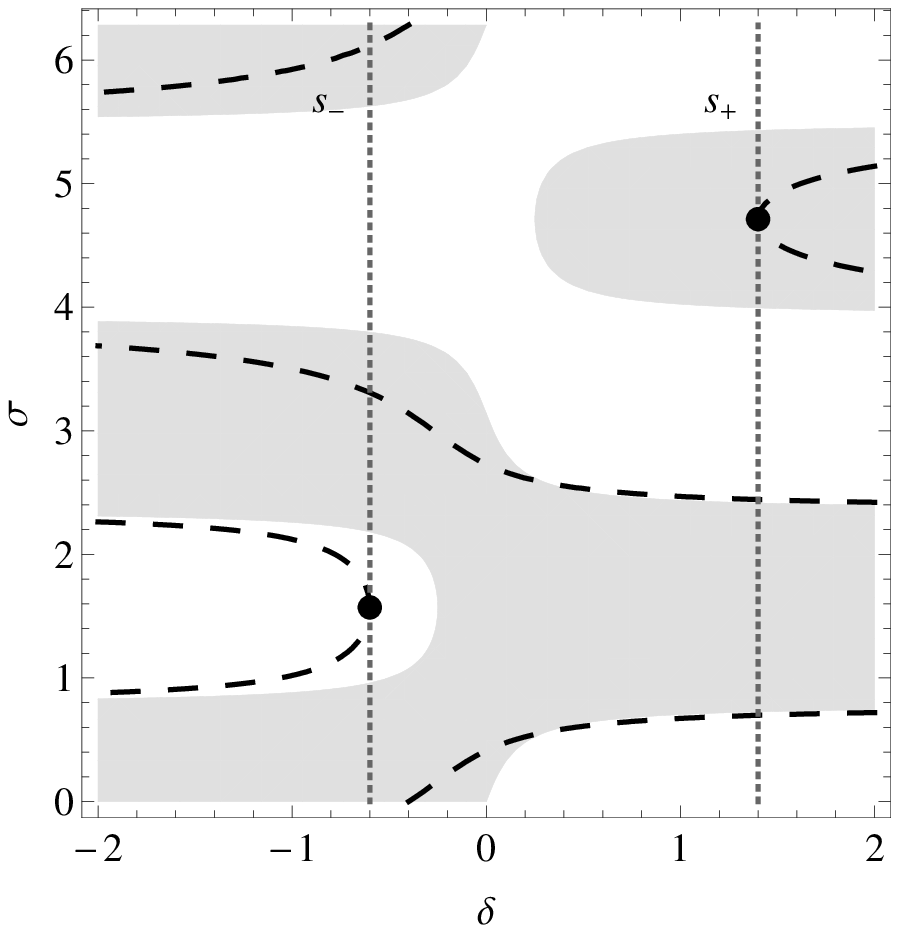}}
\hspace{2ex}
\subfigure[$\kappa=0.9$, $\displaystyle \nu=\frac{\pi}{2}$]{\includegraphics[width=0.3\linewidth]{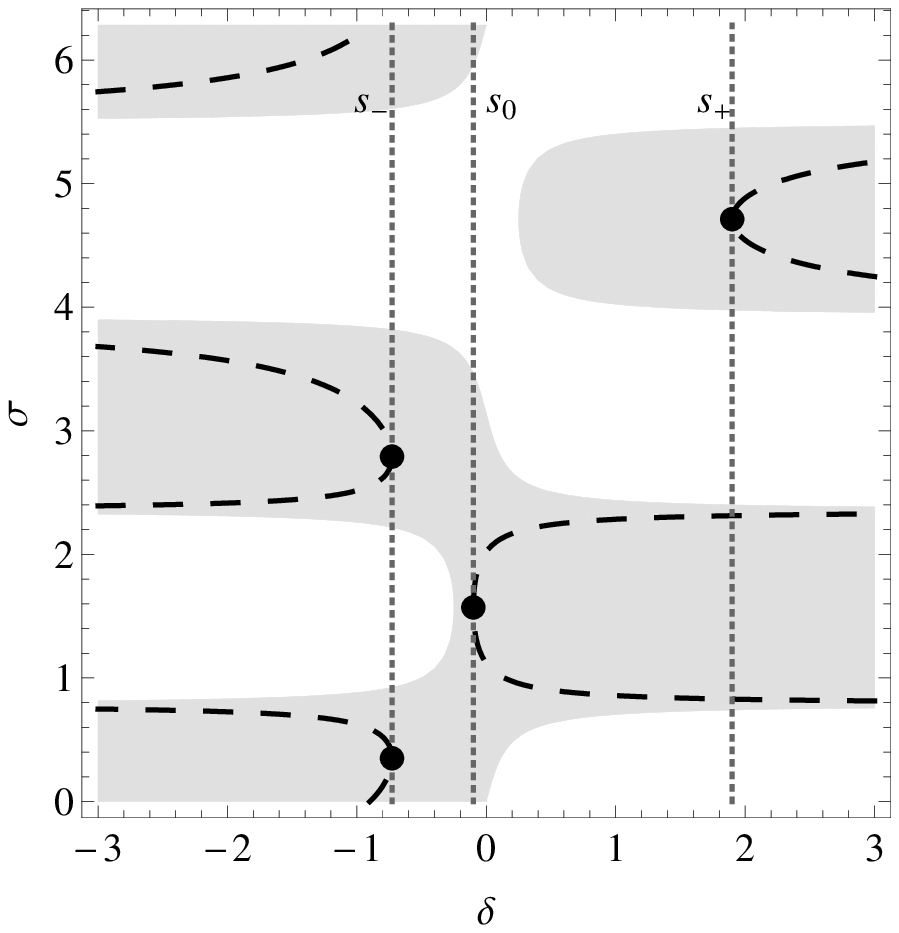}}
\hspace{2ex}
\subfigure[$\kappa=1.6$, $\displaystyle \nu=\frac{\pi}{4}$]{\includegraphics[width=0.3\linewidth]{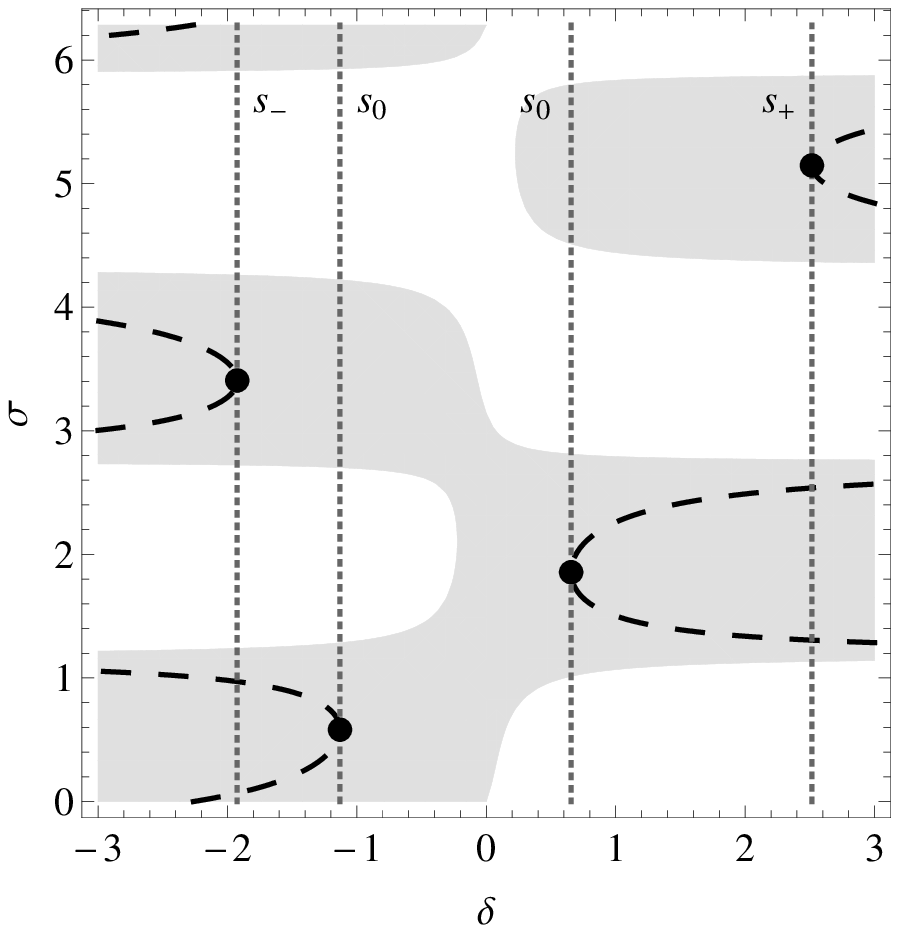}}
\caption{\small The roots to equation \eqref{teq} as functions of the parameter $\delta$ (dashed lines). The vertical dotted lines correspond to $s_-$, $s_0$ and $ s_+$. The black points correspond to  the roots of multiplicity 2. The shaded areas correspond to  $\mathcal P''(\sigma;\delta,\nu,\kappa)>0$, where the particular solution to system \eqref{MS} with asymptotics \eqref{PAS}, $\psi_0=\sigma$, $\psi_1=\phi$ is stable.} \label{fig34}
\end{figure}

\begin{figure}
\centering
\subfigure[$\kappa=0.5$, $\displaystyle \nu\approx 1.038$]{\includegraphics[width=0.3\linewidth]{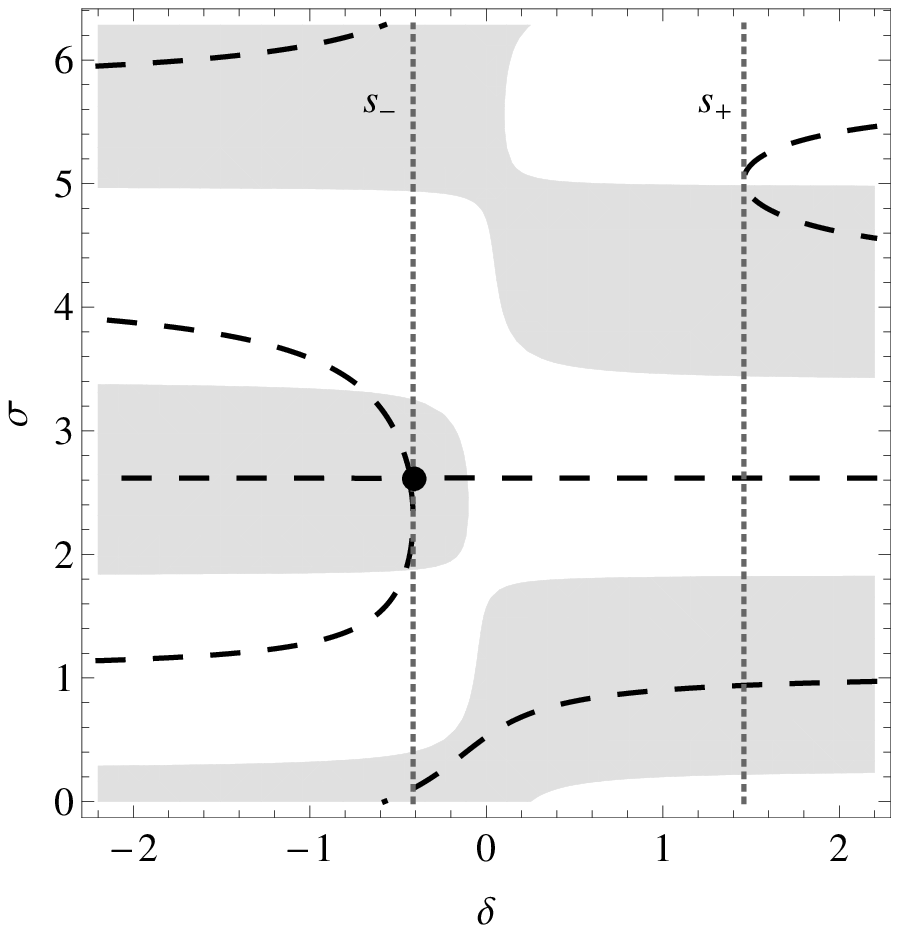}}
\hspace{2ex}
\subfigure[$\kappa=0.75$, $\displaystyle \nu=\frac{\pi}{2}$]{\includegraphics[width=0.3\linewidth]{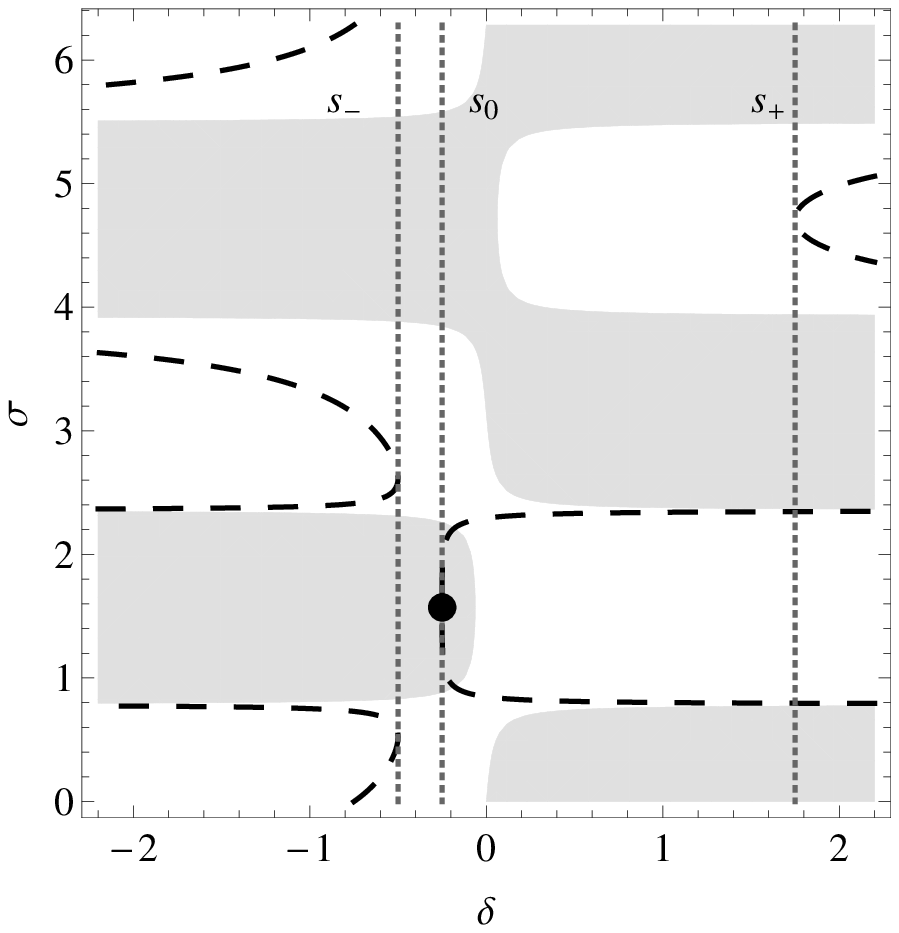}}
\hspace{2ex}
\subfigure[$\kappa=\sqrt{0.3}$, $\displaystyle \nu\approx 2.79 $]{\includegraphics[width=0.3\linewidth]{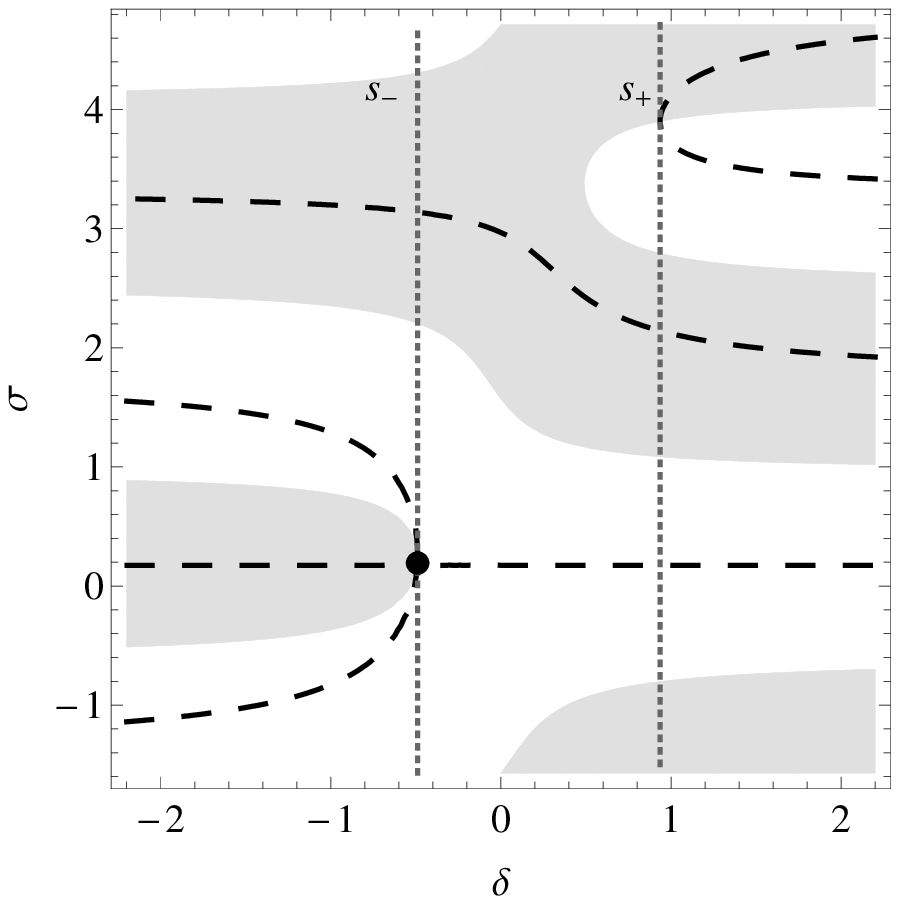}}
\caption{\small The roots to equation \eqref{teq} as functions of the parameter $\delta$ (dashed lines). The vertical dotted lines correspond to $s_-$, $s_0$ and $ s_+$.  The black points correspond to the roots of multiplicity 3 and 4. (a) The shaded areas correspond to $\mathcal P'''(\sigma;\delta,\nu,\kappa)>0$, where the particular solution to system \eqref{MS} with asymptotics \eqref{PAS3} is stable. (b) The shaded areas correspond to  $\mathcal P^{(4)}(\sigma;\delta,\nu,\kappa)>0$. (c) The shaded areas correspond to $\mathcal P'(\sigma;\delta,\nu,\kappa)>0$.} \label{fig35}
\end{figure}

\begin{figure}
\centering{
 \includegraphics[width=0.4\linewidth]{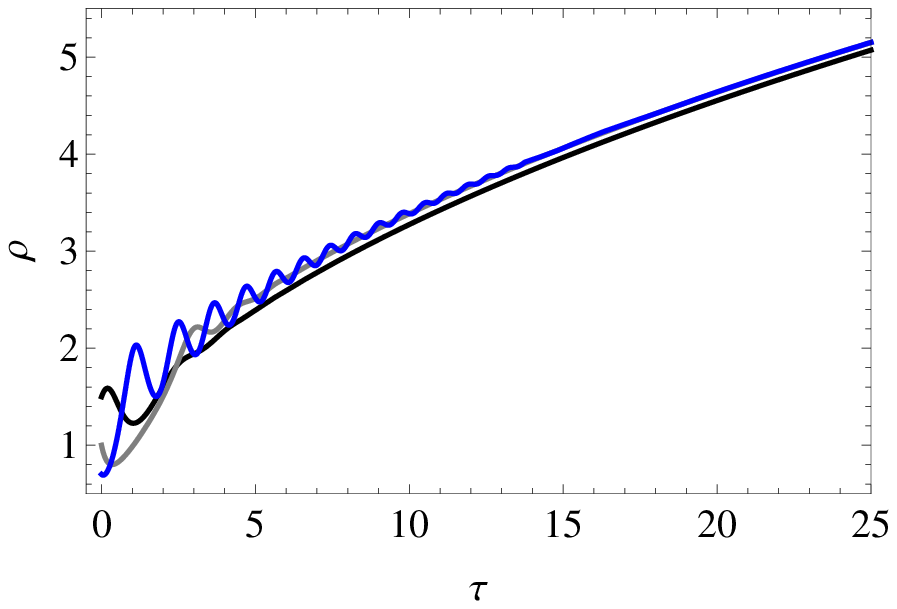}
\hspace{2ex}
 \includegraphics[width=0.4\linewidth]{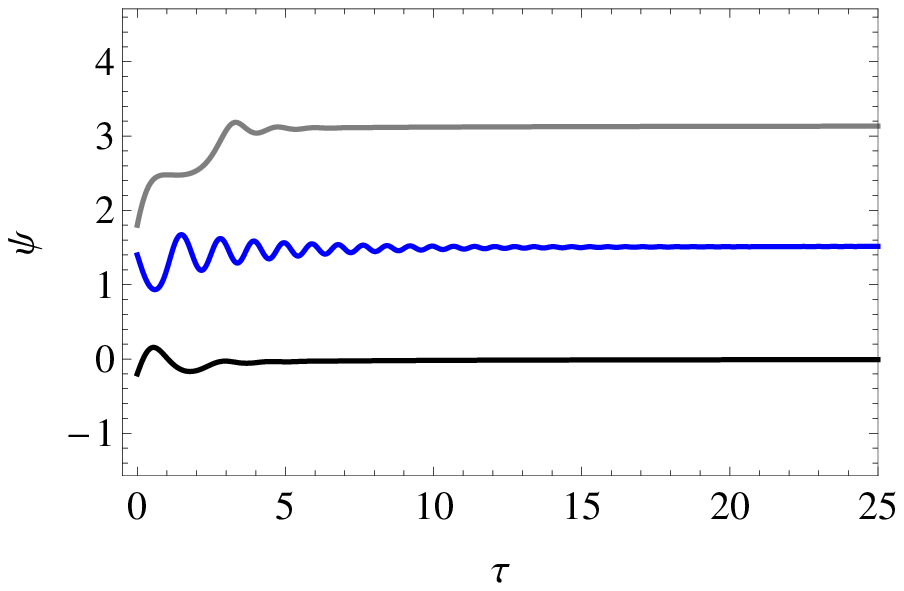}}
\caption{\small  The evolution of $\rho(\tau)$ and $\psi(\tau)$ for solutions of system \eqref{MS} with $\lambda=1$, $\alpha(\tau)\equiv \sqrt\tau$, $\beta(\tau)\equiv \delta$, $\gamma(\tau)\equiv \kappa$. The black curves correspond to $\delta=-2$, $\kappa=1$, $\nu=5\pi/6$. The gray curves correspond to $\delta=-2/\sqrt 3$, $\kappa=1$, $\nu=2\pi/3$. The blue curves correspond to $\delta=-3/2$, $\kappa=1/4$, $\nu=\pi/6$. } \label{rhopsi}
\end{figure}

If $\sigma$ is the root of multiplicity 2, system \eqref{MS} has two autoresonant modes associated with the particular solutions having asymptotics \eqref{PAS}, where $\psi_0=\sigma$ and $\psi_1=\pm\phi$. In this case, the solution with $\psi_1=\pm\phi$ is stable if $\pm\mathcal P''(\sigma;\delta,\nu,\kappa)>0$ (see Fig.~\ref{fig34}).
If $\sigma$ is the root of multiplicity 3, there is a mode corresponding to the particular solution with asymptotics \eqref{PAS3}, where $\psi_0=\sigma$,  $\psi_1=0$ and $\psi_2=\chi$. This mode is stable if $\mathcal P'''(\sigma;\delta,\nu,\kappa)>0$ (see Fig.~\ref{fig35}, a).
Finally, if $\sigma$ is the root of multiplicity 4, there are two autoresonant modes with asymptotics \eqref{PAS4}, where $\psi_0=\sigma$,  $\psi_1=\pm\xi$. In this case, the mode with $\psi_1=\xi$ is exponentially stable, while the mode with $\psi_1=-\xi$ is unstable (see Fig.~\ref{fig35}, b).

Note that the combined excitation allows to expand the use of autoresonant method for control the dynamics of nonlinear systems.  In particular, unstable autoresonant modes in systems with pure external excitation $(\delta=0)$ can be stabilized by switching on parametric pumping (see, for example, Fig.~\ref{fig35}, c, where the mode with $\sigma=\arcsin(4\kappa/3)$ becomes stable as $\delta<-\sqrt{0.24}$).
Moreover, for every $\gamma_0>0$, the parameters $\lambda>0$, $\beta_0\neq 0$ and $\nu\in [0,\pi)$ of the combined excitation can be chosen in such a way to guarantee the existence and stability of autoresonant mode with any prescribed phase shift $\psi(\tau)\approx \sigma$, $\sigma \in [0,2\pi)$. For example, for $\sigma=0$, we should take $\nu=\pi-\arcsin(-\kappa\delta^{-1})$ and $\delta< -\sqrt{\kappa^2+1/4}$. In this case, $\mathcal P(\sigma;\delta,\nu,\kappa)=0$, $\mathcal P'(\sigma;\delta,\nu,\kappa)>0$ and Theorem~\ref{cgs} is applicable. Similarly, for $\sigma=\pi$, we should take $\nu=\pi-\arcsin(-\kappa\delta^{-1})$ and $\delta< -\kappa$. For $\sigma=\pi/2$, one can take $(\delta,\nu,\kappa)$ such that $0< (\kappa-1)\delta^{-1}\leq 1$,  $\nu=\arcsin((\kappa-1)\delta^{-1})$ and $\delta \cos\nu<0$ (see Fig.~\ref{rhopsi}).

\section*{Acknowledgements}
The research presented in Section~\ref{sec3} is funded in the framework of executing the development program of Scientific Educational Mathematical Center of Privolzhsky Federal Area, additional agreement no. 075-02-2020-1421/1 to agreement no. 075-02-2020-1421.

\end{document}